\documentclass[]{article}
\usepackage{amsmath,amsfonts,amssymb}
\usepackage{amsthm}
\usepackage{mathtools}
\usepackage{cases}
\usepackage[mathfrak]{}
\usepackage[latin1]{inputenc}
\usepackage[T1]{fontenc}
\usepackage{verbatim}
\usepackage{graphicx}
\usepackage{float}
\usepackage{mathrsfs}
\usepackage [all]{xy}
\usepackage{color}
\usepackage{tikz}
\usepackage{comment}
\usepackage{hyperref}
\usepackage{hyphenat}
\usepackage{authblk}
\usepackage{comment}
\usepackage[english]{babel}
\usepackage{amsfonts}
\usepackage{caption}
\usepackage{float}
\usepackage{placeins}
\usepackage{amssymb}
\usepackage[table]{xcolor}
\usepackage[T1]{fontenc}
\usepackage[english]{babel}
\usepackage{algorithm}
\usepackage{algpseudocode}

\usepackage{amsmath}

\newtheorem{theorem}{Theorem}[section]
\newtheorem{lemma}[theorem]{Lemma}

\newtheorem{proposition}[theorem]{Proposition}%

\newtheorem{example}{Example}%
\newtheorem{remark}{Remark}%

\newtheorem{definition}{Definition}

\title{Efficient Polynomial-Time Decoding of Simplicial Anticodes with Near-Optimal Performance}
\author[1]{Antonio Jes\'us Lorite-L\'opez \footnote{Corresponding author: all871@ual.es}}
\author[2]{Daniel Camaz\'on-Portela}
\author[3]{Juan Antonio L\'opez-Ramos}
\affil[1, 2, 3]{Department of Mathematics, University of Almer\'ia, Carretera Sacramento, SN, Almer\'ia, 04120, Spain}
\date{}                     
\begin{document}
  \maketitle



\begin{abstract}
In this work, we propose an efficient decoding algorithm for codes arising from simplicial complexes, a family of binary linear codes for which no decoding method of this type was previously known.

Although the algorithm does not always attain the maximum theoretical error-correcting capability, it provides an explicit bound that can be computed directly from the structure of the complex. Moreover, this bound is asymptotically optimal: the ratio between the guaranteed correcting capability and the theoretical maximum converges to $1$ as the code length increases, under natural assumptions on the dimension of the maximal faces. The correction capability is also presented in specific examples. Finally, we introduce specific families of simplicial complexes where the algorithm successfully reaches this theoretical bound.
\end{abstract}

\section{Introduction}

Since the introduction of linear codes based on simplicial complexes in \cite{chang2018}, this area has attracted growing interest. Several studies have examined their structural properties, presenting families of optimal codes under some certain assumptions and exploring possible generalisations of the original model \cite{Hu2024,HYUN2019135,Hyun2020}. Furthermore, these codes admit alternative descriptions via trace functions, as shown in \cite{DingDing}, which makes them relevant for applications such as secret sharing \cite{DingDing,Mesnaegr2020}.

However, despite the theoretical progress that has been made, there is currently no efficient decoding algorithm, which has limited their applications, particularly in cryptographic contexts where efficient decoding procedures play a central role \cite{Berlekamp1978OnTI}.

A standard decoding technique for linear codes is syndrome decoding. Let
$C\subseteq\mathbb{F}_2^n$ be an $[n,k]$ linear code with parity-check matrix $H$.
Given a received word $r$, its syndrome is computed as
\[
s=Hr^T.
\]

The code $C$ partitions the ambient space $\mathbb{F}_2^n$ into equivalence classes, called \emph{cosets}, under the relation $r_1 \sim r_2 \iff r_1-r_2\in C$. Two vectors belong to the same coset if and only if they have the same syndrome. Each coset contains a minimum-weight representative, called the \emph{coset leader}, and syndrome decoding consists of identifying this representative and subtracting it from the received word.

Since an $[n,k]$ linear code has $2^{n-k}$ distinct syndromes, there are
$2^{n-k}$ different cosets. Determining the corresponding coset leader is
equivalent to solving the \emph{Nearest Codeword Problem}, which is known to be NP-hard \cite{Berlekamp1978OnTI}. Consequently,
syndrome decoding does not, in general, provide an efficient decoding
algorithm.

To address these limitations, we focus on the anticodes naturally induced by the complementary sets of simplicial complexes. Just as a simplicial complex is uniquely determined by its maximal faces, its complementary set is characterized by its minimal faces. This rich combinatorial and geometric structure allows for a particularly compact representation of the underlying codes, a feature that is highly desirable for reducing storage overhead and key sizes in cryptographic protocols.

Motivated by both this computational difficulty and the structural advantages of these objects, this paper exploits the structure of simplicial anticodes to construct an efficient decoding algorithm. Unlike standard decoding algorithms, the proposed procedure recovers the original message directly rather than the transmitted codeword. As a consequence, no additional step is required to recover the message from the decoded codeword, avoiding the need to solve the corresponding linear system. Moreover, if the ordering of columns is known, the algorithm becomes even more efficient.

Although the algorithm does not always attain the maximum theoretical error-correction capability, it provides an explicit and easily computable bound and is able to correct a number of errors close to this optimum. Moreover, we prove that the ratio between the number of correctable errors and the theoretical error-correction capability converges to $1$ as the code length tends to infinity. In this sense, the proposed algorithm is asymptotically optimal. Comparisons on several examples are also included to illustrate its practical performance.

We identify several families of anticodes for which the proposed decoding algorithm always achieves the theoretical error-correction capability under natural assumptions on the associated simplicial complexes, inspired by ideas developed in \cite{SOLOMON1965170}. We also apply the algorithm to several families of classical codes, including simplex codes, first-order Reed-Muller codes, and MacDonald codes over $\mathbb{F}_2$ \cite{MacWilliams1977,ReedMuller1954,macdonal1960}. Viewing these codes as anticodes, we show that the algorithm achieves the maximum error-correcting capability allowed by the minimum distance.

\section{Preliminaries} 
In this paper, we work over the finite field $\mathbb{F}_2$. This section introduces the family of linear codes under consideration and recalls their most fundamental properties. 
\begin{definition}
    Let $m$ be a natural number and let $[m]$ denote the set $\{1,\ldots,m\}$. A simplicial complex $\Delta$ over $[m]$ is a family of subsets $\Delta \subseteq \mathcal{P}([m])$ such that for every $\sigma \in \Delta$ it is verified that all the subsets of $\sigma$ also belong to $\Delta$.
\end{definition}

This structure allows us to describe a simplicial complex $\Delta$ from their maximal faces, that is, those that are not contained in any other face of the complex $\Delta$.

\begin{definition}
   The complementary of a simplicial complex $\Delta$ is defined as
    \[
        \Delta^c := \mathcal{P}([m]) \setminus \Delta.
    \]
 \end{definition}

Just as a simplicial complex is uniquely determined by its maximal faces, its complement is uniquely determined by its minimal faces, namely, those faces that do not properly contain any other face of the complement.

\begin{definition}
    The linear code associated to the complementary complex, denoted by $C_{\Delta^c}$ and named \emph{anticode}, is generated by the incidence matrix $G$ of $\Delta^c$, that is, 
    \[
        C_{\Delta^c}=\{uG^T : u \in \mathbb{F}_2^m\},
    \]
where the columns of $G^T$ are formed by the characteristic vectors $\chi_\sigma$ of $\sigma \in \Delta^c$:
\[
\chi_\sigma(i)=
\begin{cases}
1, & \text{if } i \in \sigma,\\
0, & \text{if } i \notin \sigma.
\end{cases}
\]
\end{definition}
Equivalently, the code $C_\Delta$ can be defined using the characteristic vectors associated with the faces of the simplicial complex $\Delta$.

Although the parameters of these anticodes, particularly their minimum distance, are not known in general, several families have been identified for which they can be determined explicitly. In our previous work \cite{lorite2026}, we showed that, for a simplicial code
$C_\Delta\subseteq\mathbb{F}_2^m$, the weight of a codeword $c_\Delta(u)$ is given by
\[
w(c_\Delta(u)) = \# \left\{ \sigma\in\Delta:
|\sigma \cap\operatorname{supp}(u)\right| \equiv 1 \pmod 2 \}.
\] 

\begin{lemma} \label{lemma1}
    Let $m$ be a natural number and let $\sigma$ be a non-empty subset of $[m]$. There are exactly $2^{m-1}$ subsets of $[m]$ whose intersection with $\sigma$ has an odd cardinality.
\end{lemma}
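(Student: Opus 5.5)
The plan is to use a parity-flipping involution. Since $\sigma$ is non-empty, I fix an element $i_0\in\sigma$ and consider the map $\phi:\mathcal{P}([m])\to\mathcal{P}([m])$ given by symmetric difference with $\{i_0\}$:
\[
\phi(\tau)=\tau\,\triangle\,\{i_0\}.
\]
That is, $\phi$ adds $i_0$ to $\tau$ if it is absent and removes it if it is present. Clearly $\phi\circ\phi$ is the identity, so $\phi$ is a bijection of $\mathcal{P}([m])$.

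The key step is to check that $\phi$ changes the parity of $|\tau\cap\sigma|$. Because $i_0\in\sigma$, we have
\[
\phi(\tau)\cap\sigma=(\tau\cap\sigma)\,\triangle\,\{i_0\}.
\]
Hence $|\phi(\tau)\cap\sigma|=|\tau\cap\sigma|\pm 1$. It follows that $\phi$ restricts to a bijection between the family of subsets meeting $\sigma$ in an odd number of elements and the family meeting $\sigma$ in an even number of elements.

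These two families partition $\mathcal{P}([m])$, which has $2^m$ elements, and they have equal size, so each contains exactly $2^{m-1}$ subsets. This proves the claim.

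There is no real obstacle here. The only point needing care is the use of the hypothesis $\sigma\neq\emptyset$, which guarantees that $i_0$ exists. It is genuinely needed: for $\sigma=\emptyset$ every intersection is empty, so no subset has odd intersection and the count is $0$.

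As a cross-check, one could instead write $|\sigma|=s\ge 1$ and count directly. The number of valid subsets is
\[
2^{m-s}\sum_{j\ \text{odd}}\binom{s}{j}=2^{m-s}\cdot 2^{s-1}=2^{m-1},
\]
using the binomial identity $\sum_j(-1)^j\binom{s}{j}=0$ for $s\ge1$.
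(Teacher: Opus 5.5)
Your proof is correct, but your main argument takes a different route from the paper's, although your ``cross-check'' is exactly the paper's argument.

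The paper counts directly. It writes a subset as $\tau=(\tau\cap\sigma)\cup(\tau\setminus\sigma)$. The part outside $\sigma$ can be chosen freely in $2^{m-|\sigma|}$ ways. The part inside $\sigma$ is chosen among the $2^{|\sigma|-1}$ odd-size subsets of $\sigma$, which gives $2^{m-1}$.

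Your involution $\tau\mapsto\tau\triangle\{i_0\}$ works on all of $\mathcal{P}([m])$ at once. It therefore does not need the fact that a non-empty set has exactly $2^{|\sigma|-1}$ subsets of odd size. The paper simply asserts that fact without proof; it is usually proved by the same involution or by the binomial identity you cite. So your version is the more self-contained of the two.

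The paper's product count has its own advantage: it makes the structure explicit. Only $\tau\cap\sigma$ affects the parity, and the coordinates outside $\sigma$ contribute a free factor $2^{m-|\sigma|}$.

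Both arguments use $\sigma\neq\varnothing$ at the same essential point: you need it to pick $i_0$, and the paper needs it for the count $2^{|\sigma|-1}$.
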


\begin{proof}
    Let $k$ be the number of elements of $\sigma$. We can choose $2^{n-k}$ subsets that do not contain elements in $\sigma$. Since $\sigma$ is non-empty, there exist $2^{k-1}$ subsets of $\sigma$ with an odd cardinality. So, there are $2^{k-1} \cdot 2^{n-k} = 2^{n-1}$ combinations of subsets of $[n]$ which contain an odd number of elements of $\sigma$.
\end{proof}

Using this lemma, it is easy to see that the maximum possible weight of a codeword is $\sum_{i=1}^s2^{|A_i|-1}$.

In particular, if $\Delta$ is a simplicial complex whose set of maximal faces $\mathcal{F}=\{A_1,\ldots,A_s\}$ satisfies
\[
A_i\setminus\bigcup_{j\in[s]\setminus\{i\}}A_j\neq\varnothing
\]
for every $i\in[s]$, then the maximum weight is attained by choosing one vertex belonging exclusively to each maximal face. Consequently, the associated anticode $C_{\Delta^c}$ has parameters $[2^m-|\Delta|,\; m,\; 2^{m-1}-\sum_{i=1}^s2^{|A_i|-1}]$.

Furthermore, necessary and sufficient conditions for these anticodes to be optimal were established in \cite{Hyun2020}. In this context, this condition can be expressed as
\[
|\Delta|<2T-l(T)-v_2(T),
\]
where $T=\sum_{i=1}^s2^{|A_i|-1}$, $v_2(T)$ denotes the highest exponent $j$ such that $2^j$ divides $T$, and $l(T)$ is the length of the binary representation of $T$.

\section{Main results}
Throughout this section we present the proposed decoding algorithm and studies its computational complexity, its error-correction capability, and its main properties.

\begin{lemma} \label{lemma_parejas}
    Let $m$ be a natural number, let $\Delta$ be a simplicial complex over the vertex set $[m]$, and let $G$ be the incidence matrix of $\Delta^c$. Then, for every column of $G^T$ having a $0$ in position $i$, there exists another column that differs only in the $i$-th coordinate.
\end{lemma}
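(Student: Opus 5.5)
The plan is to use that $\Delta^c$ is closed upward. The columns of $G^T$ are exactly the characteristic vectors $\chi_\sigma$ with $\sigma\in\Delta^c$. If a column $\chi_\sigma$ has a $0$ in position $i$, then $i\notin\sigma$. The natural candidate for the partner column is $\chi_{\sigma\cup\{i\}}$, which agrees with $\chi_\sigma$ in every coordinate except the $i$-th, where it has a $1$. So it suffices to show that $\sigma\cup\{i\}\in\Delta^c$.

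First I check the upward closure of $\Delta^c$. Suppose $\sigma\cup\{i\}\notin\Delta^c$. Then $\sigma\cup\{i\}\in\Delta$. By the definition of a simplicial complex, every subset of a face of $\Delta$ lies in $\Delta$, so $\sigma\in\Delta$. This contradicts $\sigma\in\Delta^c=\mathcal{P}([m])\setminus\Delta$. Hence $\sigma\cup\{i\}\in\Delta^c$, and $\chi_{\sigma\cup\{i\}}$ is a column of $G^T$.

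It remains to confirm that this column is distinct from $\chi_\sigma$ and differs from it only in coordinate $i$. This is immediate because $i\notin\sigma$ and the two sets differ exactly by the element $i$. The columns of $G^T$ are indexed by distinct subsets, so the two columns are distinct.

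There is no real obstacle here; the only point needing care is the direction of closure. $\Delta$ is closed under taking subsets, so its complement is closed under taking supersets. The argument therefore works by adding $i$ to $\sigma$, not by removing it. Applying this to every zero entry also shows that the columns of $G^T$ with a $0$ in position $i$ are paired injectively with columns having a $1$ there. This pairing is presumably what the decoding algorithm will exploit.
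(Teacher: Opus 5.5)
Your proof is correct and follows the same route as the paper: you show that $\Delta^c$ is closed under taking supersets, then pair $\chi_\sigma$ with $\chi_{\sigma\cup\{i\}}$. Your justification of the upward closure is actually cleaner than the paper's. You argue that if $\sigma\cup\{i\}\in\Delta$ then $\sigma\in\Delta$ by downward closure, which is a contradiction. The paper instead argues via set complements and implicitly treats $\sigma\in\Delta^c$ as if it meant $[m]\setminus\sigma\in\Delta$.
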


\begin{proof}
    Let
    \[
        \Delta^c := \mathcal{P}([m]) \setminus \Delta.
    \]

    If $\sigma \in \Delta^c$ and $\sigma \subset \tau$, then, $[m]\setminus \tau \subset [m]\setminus \sigma \in \Delta$. Therefore, $[m]\setminus \tau \in \Delta$ and $\tau \in \Delta^c$. Thus, if $\sigma \in \Delta^c$ and $i \notin \sigma$, then
    \[
        \tau := \sigma \cup \{i\} \in \Delta^c.
    \]
    Consequently, the characteristic vectors $\chi_\sigma$ and $\chi_\tau$ coincide in all coordinates except in the $i$-th one, where
    \[
        \chi_\tau(i)=1
        \quad \text{and} \quad
        \chi_\sigma(i)=0.
    \]
    
    Therefore, for every column of $G^T$ with a $0$ in position $i$, there is another column that differs from it only in that coordinate.
    
\end{proof}

\begin{definition}
Given a simplicial complex $\Delta \subseteq \mathcal{P}([m])$ we define the subcomplex $\Delta_{-i}$ as follows:
\begin{equation*}
\Delta_{-i}= \{\sigma\in\Delta : i\notin\sigma\}
\end{equation*}
\end{definition}

\begin{lemma}
    Let $\Delta$ be a simplicial complex whose set of maximal faces is $\{A_1,...,A_s\}$. The size of $\Delta_{-i}$ is given by
    \[
|\Delta_{-i}| = \sum_{r=1}^s (-1)^{r+1} \sum_{1 \le j_1 < \dots < j_r \le s} 2^{|B_{j_1} \cap \dots \cap B_{j_r}|},
\]
where $B_j = A_j \setminus \{i\}$.
\end{lemma}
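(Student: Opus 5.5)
The plan is to write $\Delta_{-i}$ as a union of power sets and then count it by inclusion--exclusion. Since $\Delta$ is a simplicial complex with maximal faces $A_1,\ldots,A_s$, every face lies in some $A_j$, and every subset of some $A_j$ is a face. Hence
\[
\Delta=\bigcup_{j=1}^s \mathcal{P}(A_j).
\]
The first step is to show that
\[
\Delta_{-i}=\bigcup_{j=1}^s \mathcal{P}(B_j), \qquad B_j=A_j\setminus\{i\}.
\]
For one inclusion, take $\sigma\in\Delta_{-i}$. Then $\sigma\subseteq A_j$ for some $j$ and $i\notin\sigma$, so $\sigma\subseteq B_j$. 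For the other inclusion, take $\sigma\subseteq B_j$. Then $\sigma\subseteq A_j$, so $\sigma\in\Delta$, and $i\notin\sigma$ because $i\notin B_j$.

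The second step is to apply the inclusion--exclusion principle to this finite union of finite families:
\[
|\Delta_{-i}|=\sum_{r=1}^s(-1)^{r+1}\sum_{1\le j_1<\dots<j_r\le s}\bigl|\mathcal{P}(B_{j_1})\cap\dots\cap\mathcal{P}(B_{j_r})\bigr|.
\]

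The third step is to use the identity
\[
\mathcal{P}(X)\cap\mathcal{P}(Y)=\mathcal{P}(X\cap Y).
\]
It holds because a set is contained in both $X$ and $Y$ exactly when it is contained in $X\cap Y$. By induction on $r$, each intersection in the sum equals $\mathcal{P}(B_{j_1}\cap\dots\cap B_{j_r})$, which has cardinality $2^{|B_{j_1}\cap\dots\cap B_{j_r}|}$. Substituting this into the inclusion--exclusion sum gives exactly the stated formula.

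There is no real obstacle here. The only point needing care is the first step: one must check that removing $i$ from each maximal face loses no faces and adds none. That check is the two-line double inclusion above. It is also worth noting that some $B_j$ may no longer be maximal, or may coincide with one another, after $i$ is removed. This does not matter, because inclusion--exclusion is valid for any finite family of sets, redundant members included.
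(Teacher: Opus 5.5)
Your proof is correct and follows the same route as the paper, whose proof simply invokes the inclusion--exclusion principle with a citation. You supply the details the paper leaves implicit: the decomposition $\Delta_{-i}=\bigcup_{j}\mathcal{P}(B_j)$ and the identity $\mathcal{P}(X)\cap\mathcal{P}(Y)=\mathcal{P}(X\cap Y)$.
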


\begin{proof}
     It follows from the inclusion-exclusion principle (see \cite{adamaszek2013}).
\end{proof}

\subsection{The Simplicial Decoder}

Let $n$ and $k$ denote the length and the dimension of the code $C_{\Delta^c}$, respectively.

\textbf{Step 0}

Firstly, given the simplicial complex $\Delta \subseteq \mathcal{P}([m])$, we compute the element $i_{0}\in [m]$ for which the associated $0$-dimensional simplex has the shortest possible link length.
\begin{algorithm}[H]
\caption{Presetting}
\textbf{Input:} $(\Delta)$
\begin{algorithmic}[1]   
    \State $i_0 \gets 1$
    \For{$i\in [m]$}
        \If{$|\Delta_{-i}|>|\Delta_{-i_0}|$}
            \State $i_1 \gets i_0$
            \State $i_0 \gets i$
        \EndIf
    \EndFor
    \State $I \gets [m]\setminus\{i_0\}$
\State\Return $I$
\end{algorithmic}
\end{algorithm}

\begin{remark}
The number of simplices of a simplicial complex can be computed using the inclusion-exclusion principle (see \cite{adamaszek2013}).
\end{remark}


\begin{proposition}
Let $G^T$ be the generator matrix of the anticode $C_{\Delta^c}$, and let
$v=uG^T+e$. Then
\[
v_j+v_{j'}=u_i+e_j+e_{j'},
\]
where $j$ and $j'$ are the indices of two columns of $G^T$ that differ only in
the $i$-th coordinate. In particular, if $e_j=e_{j'}=0$, then
\[
v_j+v_{j'}=u_i.
\]
\end{proposition}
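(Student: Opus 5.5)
The proof is a direct coordinate computation over $\mathbb{F}_2$; the only earlier result it needs is Lemma~\ref{lemma_parejas}, which guarantees that such pairs of columns exist.

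First, I would record what a single coordinate of $v$ looks like. Write $g_j$ for the $j$-th column of $G^T$. By definition of the anticode, $g_j=\chi_{\sigma_j}$ for some face $\sigma_j\in\Delta^c$. The $j$-th coordinate of $uG^T$ is then
\[
(uG^T)_j=\sum_{l=1}^m u_l\,\chi_{\sigma_j}(l),
\]
so that $v_j=\sum_{l=1}^m u_l\chi_{\sigma_j}(l)+e_j$.

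Next, I would fix indices $j,j'$ whose columns differ only in coordinate $i$; by Lemma~\ref{lemma_parejas} these are of the form $\sigma_{j'}=\sigma_j\cup\{i\}$ with $i\notin\sigma_j$. Adding the two coordinates in characteristic $2$ gives
\[
v_j+v_{j'}=\sum_{l=1}^m u_l\bigl(\chi_{\sigma_j}(l)+\chi_{\sigma_{j'}}(l)\bigr)+e_j+e_{j'}.
\]
For $l\neq i$ the two characteristic vectors agree, so $\chi_{\sigma_j}(l)+\chi_{\sigma_{j'}}(l)=0$. For $l=i$ the sum is $0+1=1$. Hence the sum collapses to $u_i$, which yields $v_j+v_{j'}=u_i+e_j+e_{j'}$. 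Setting $e_j=e_{j'}=0$ gives the particular case.

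There is no real obstacle. The one point to check is that the argument does not depend on which of $j,j'$ carries the $1$ in position $i$: the difference of the two columns is exactly the $i$-th standard basis vector, and over $\mathbb{F}_2$ subtraction equals addition, so the computation is symmetric in $j$ and $j'$.
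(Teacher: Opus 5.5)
Your proof is correct and takes essentially the same approach as the paper: both write each received coordinate as $\langle u, G_{\cdot,j}\rangle + e_j$ and use that the two columns differ by the $i$-th standard basis vector, so their inner products with $u$ differ by exactly $u_i$ over $\mathbb{F}_2$. One small point: the form $\sigma_{j'}=\sigma_j\cup\{i\}$ (up to swapping $j$ and $j'$) follows directly from the hypothesis that the two columns differ only in coordinate $i$, whereas Lemma~\ref{lemma_parejas} is only needed to guarantee that such pairs exist.
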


\begin{proof}
By Lemma~\ref{lemma_parejas}, every column of $G^T$ having a $0$ in the $i$-th coordinate is paired with a column having a $1$ in that coordinate and coinciding in all the remaining coordinates. Let
$G_{\cdot,j}$ and $G_{\cdot,j'}$ be such a pair.

Then
\[
v_j=\langle u,G_{\cdot,j}\rangle+e_j,
\qquad
v_{j'}=\langle u,G_{\cdot,j'}\rangle+e_{j'}.
\]
Since the two columns differ only in the $i$-th coordinate,
\[
\langle u,G_{\cdot,j'}\rangle
=
\langle u,G_{\cdot,j}\rangle+u_i.
\]
Therefore,
\[
v_j+v_{j'}=u_i+e_j+e_{j'}.
\]

\end{proof}

\subsubsection*{Step 1}

By the previous Proposition, each pair of columns provides an estimate of $u_i$. The recovered value is incorrect only when exactly one of the two bits $v_j$ and $v_{j'}$ contains an error. If both positions are error-free or both contain errors, the correct value of $u_i$ is still obtained. Repeating this procedure for every pair of columns that differ only in the $i$-th coordinate yields multiple estimates of $u_i$. If fewer than half of these pairs are affected by exactly one error, the correct value of $u_i$ is recovered by a majority vote.

\vspace{5mm}
\begin{algorithm}[H] 
\caption{Simplicial Decoder (Part I) }
\textbf{Input:} $(G^T, n, I, v=(v_1, v_2, \dots, v_n))$. 
\begin{algorithmic}
    \For{$i\in I$}
        \State $N_0, N_1 \gets 0$ 
        \For{$j\in [n]$}
            \If{$g_{ij}=0$}
                \State Select $j'$ such that $g_{ij'}=1$ and $g_{lj}=g_{lj'}$ for all $l\neq i$
                \If{$v_j+v_{j'}=1$}
                    \State $N_1 \gets N_1+1$
                \Else
                    \State $N_0 \gets N_0+1$
                \EndIf
            \EndIf
        \EndFor
        \If{$N_1>N_0$}
            \State $u_i\gets 1$
        \Else
            \State $u_i\gets 0$
        \EndIf
    \EndFor
\end{algorithmic}
\end{algorithm}

Once $k-1$ bits of information have been recovered, we can improve the decoding of the remaining bit $u_{i_0}$.

\begin{proposition}
Let $i_0$ be the unique coordinate of $u$ that has not yet been recovered, and
let $G_{\cdot,s}$ be a column of $G^T$ whose $i_0$-th entry is equal to $1$. Then
\[
\hat{v}_s:=v_s-\sum_{j\neq i_0}u_jg_{j,s}=u_{i_0}+e_s.
\]
Consequently, each such column provides an independent estimate of $u_{i_0}$.
\end{proposition}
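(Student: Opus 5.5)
The plan is to expand the $s$-th coordinate of the received word directly and cancel the known part, in the same spirit as the proof of the previous Proposition. We write $v=uG^T+e$, so that the $s$-th coordinate is the inner product of $u$ with the $s$-th column $G_{\cdot,s}$ plus the error. Over $\mathbb{F}_2$ this reads
\[
v_s=\sum_{j=1}^{m}u_j g_{j,s}+e_s .
\]

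The key step is to split this sum into the term $j=i_0$ and the terms $j\neq i_0$. Since $g_{i_0,s}=1$ by the choice of the column, the split gives
\[
v_s=u_{i_0}+\sum_{j\neq i_0}u_jg_{j,s}+e_s .
\]
The bits $u_j$ with $j\neq i_0$ are the coordinates already recovered in Part I. Under the standing assumption that they were recovered correctly, the sum $\sum_{j\neq i_0}u_jg_{j,s}$ is a known quantity. Subtracting it (equivalently adding it, in characteristic $2$) yields $\hat v_s=u_{i_0}+e_s$, which is the identity in the statement.

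For the final clause, we observe that each column $s$ with $g_{i_0,s}=1$ involves only the single error bit $e_s$, and distinct columns involve distinct error positions. Hence each $\hat v_s$ is a separate estimate of $u_{i_0}$, and it is wrong exactly when $e_s=1$. This is in contrast to Step 1, where a pair estimate was corrupted by the combination $e_j+e_{j'}$. The estimates are independent in the sense that no error position is shared between two of them.

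There is no real obstacle here, since the argument is a one-line linear expansion. The only point needing care is the hypothesis that the previously recovered $u_j$ are correct; the identity holds conditionally on Part I having succeeded. I would state this explicitly so that the subsequent majority vote over the columns containing $i_0$ is justified.
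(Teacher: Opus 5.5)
Your proposal is correct and follows the paper's proof: expand $v_s=\langle u,G_{\cdot,s}\rangle+e_s$, split off the $i_0$ term using $g_{i_0,s}=1$, and subtract the already-known sum $\sum_{j\neq i_0}u_jg_{j,s}$. Your remark that the identity is conditional on Part I having recovered those bits correctly is a useful clarification that the paper leaves implicit.
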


\begin{proof}
Let $G_{\cdot,s}$ be a column of $G^T$ with $g_{i_0,s}=1$. Since $v=uG^T+e$,
we have
\[
v_s=\langle u,G_{\cdot,s}\rangle+e_s
    =u_{i_0}+\sum_{j\neq i_0}u_jg_{j,s}+e_s.
\]
The values $\{u_j\}_{j\neq i_0}$ are already known, we define
\[
\hat{v}_s
=
v_s-\sum_{j\neq i_0}u_jg_{j,s}.
\]
It follows that $\hat{v}_s=u_{i_0}+e_s$.

\end{proof}
\subsubsection*{Step 2}
The last bit of information is then recovered by choosing the most frequent value among all estimates $\hat{v}_s$. This method is significantly more robust, since it exploits every column of $G^T$ with $g_{i_0,s}=1$ individually, rather than being restricted to specific pairs of columns.



\setcounter{algorithm}{1}
\begin{algorithm}[H] 
\caption{Simplicial Decoder (Part II) }
\textbf{Step 2: Decode $u_{i_0}$}
\begin{algorithmic}[1]  
    \State $N_0, N_1 \gets 0$
    \For{$j\in [n]$}
        \State $S \gets 0$
        \If{$g_{i_0j}\neq 0$}
            \For{$i\in I$}
                \State $S \gets S+g_{ij}u_i$
            \EndFor
            \If{$v_j-S=1$}
                \State $N_1 \gets N_1+1$
            \Else
                \State $N_0 \gets N_0+1$
            \EndIf
        \EndIf
    \EndFor
    \If{$N_1>N_0$}
        \State $u_{i_0}\gets 1$
    \Else
        \State $u_{i_0}\gets 0$
    \EndIf
    \State \Return $u=(u_1,\dots,u_k)$
\end{algorithmic}
\end{algorithm}

\begin{remark}
Although the algorithm does not explicitly use the value of $i_1$,  computing it is still useful, as  the number of errors that the algorithm can correct depends on $|\Delta_{-i_1}|$.
\end{remark}

\begin{example} \label{example:1} 
Consider the simplicial complex $\Delta$ on the vertex set $[4]$ and maximal faces $\mathcal{F}=\{\{1,2\},\{3,4\}\}$.

It is easy to verify that
\[
|\Delta_{-i}|=5,\qquad \text{for every } i\in[4].
\]
The generator matrix of the associated anticode can be constructed in this way
\[
\begin{array}{c|ccccccccc}
 & \{1,3\} & \{1,4\} & \{2,3\} & \{2,4\} &
 \{1,2,3\} & \{1,2,4\} & \{1,3,4\} & \{2,3,4\} & \{1,2,3,4\} \\
\hline
1 & 1 & 1 & 0 & 0 & 1 & 1 & 1 & 0 & 1 \\
2 & 0 & 0 & 1 & 1 & 1 & 1 & 0 & 1 & 1 \\
3 & 1 & 0 & 1 & 0 & 1 & 0 & 1 & 1 & 1 \\
4 & 0 & 1 & 0 & 1 & 0 & 1 & 1 & 1 & 1
\end{array}
\]
and therefore
\[
G^{T}=
\begin{pmatrix}
1 & 1 & 0 & 0 & 1 & 1 & 1 & 0 & 1\\
0 & 0 & 1 & 1 & 1 & 1 & 0 & 1 & 1\\
1 & 0 & 1 & 0 & 1 & 0 & 1 & 1 & 1\\
0 & 1 & 0 & 1 & 0 & 1 & 1 & 1 & 1
\end{pmatrix}.
\]

Let the information word be $u=(1,1,1,1).$ Its corresponding codeword is
\[
c_{\Delta^c}(u)=uG^T=(0,0,0,0,1,1,1,1,0).
\]
Now suppose that the error vector is $e=(0,0,1,0,0,0,0,0,0)$,
so that the received word is
\[
v=c_{\Delta^c}(u)+e=(0,0,1,0,1,1,1,1,0).
\]

We illustrate the decoding procedure for the first information bit. We first identify the columns of $G^T$ whose first entry is $0$. These are columns $3$, $4$, and $8$. Each of them is paired with its associated column, giving the pairs
\[
(3,5),\qquad (4,6),\qquad (8,9).
\]

For each pair, we compare the corresponding entries of the received word. In the first pair the two entries coincide, so this pair votes for the value $0$. In contrast, the entries differ in the remaining two pairs, and therefore both vote for the value $1$. By majority voting, the first information bit is decoded as $1$. Repeating the same procedure for the remaining information bits allow us to recover the original information vector.
\end{example}

\begin{remark}
    If an additional error were introduced at position $9$, the majority vote for the first information bit would change: two pairs would vote for $0$, while only one pair would vote for $1$. As a result, the algorithm would decode the first bit incorrectly.
\end{remark}

\subsection{Algorithm Complexity}

\begin{proposition}
The decoding algorithm described above runs in time
\[
O((k-1) n \log n).
\]
In particular, its complexity is polynomial in the code length.
\end{proposition}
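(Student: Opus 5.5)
We bound the cost of each part of the Simplicial Decoder separately and then add the bounds. The parameters are $n$, the number of columns of $G^T$, and $k=m$, the number of rows. Step 0 depends only on $\Delta$, not on the received word, so we treat it as precomputation. If it is counted, it costs $m$ evaluations of $|\Delta_{-i}|$, which can also be obtained by scanning the $n$ columns of $G^T$; this does not exceed the bound.

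The main issue is Part I, specifically the instruction ``select $j'$ such that $g_{ij'}=1$ and $g_{lj}=g_{lj'}$ for all $l\neq i$''. Done naively, this search costs $O(n\cdot k)$ per $j$, which would give $O(k^2 n^2)$ overall. To avoid this, we encode each column $G_{\cdot,j}$ as the integer $\sum_l g_{lj}2^{l-1}$, i.e.\ the characteristic vector $\chi_\sigma$ read as a binary number. We sort these $n$ keys once, together with their column indices, in $O(n\log n)$ comparisons. Then, for a fixed $i\in I$ and a column $j$ with $g_{ij}=0$, the partner column has key $\mathrm{key}(j)+2^{i-1}$. Lemma~\ref{lemma_parejas} guarantees that this key is present, because $\sigma\cup\{i\}\in\Delta^c$. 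A binary search finds it in $O(\log n)$ steps. Each iteration therefore does $O(\log n)$ work for the lookup plus $O(1)$ work to update $N_0$ or $N_1$ from $v_j+v_{j'}$. With at most $n$ values of $j$ per $i$ and $|I|=k-1$, Part I costs $O((k-1)n\log n)$, plus the one-time sort of $O(n\log n)$.

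Part II runs over at most $n$ columns and computes $S=\sum_{i\in I}g_{ij}u_i$ for each. Done directly, this is $O((k-1)n)$. Alternatively, $S$ is the parity of $\mathrm{key}(j)\wedge \mathrm{key}(u)$ with the $i_0$ bit masked out. Either way, Part II is dominated by the Part I bound. The final majority votes are $O(1)$ per bit. Summing gives
\[
O(n\log n)+O((k-1)n\log n)+O((k-1)n)=O((k-1)n\log n).
\]
Since $k=m\le \log_2 n+1$, because $\Delta^c$ contains at least the $2^{m-1}$ sets containing a fixed vertex outside a maximal face, or trivially since $k\le n$, this bound is polynomial in $n$.

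The main obstacle is justifying the $O(\log n)$ cost of the partner lookup. One has to fix a computational model in which comparing or adding $k$-bit keys is $O(1)$, that is, word-RAM with $k\le \log_2 n+1$ bits. This assumption is reasonable precisely because $n\ge 2^{m-1}$. If one instead charges $O(k)$ per comparison, the bound picks up an extra factor of $k$. In that case I would note that this factor is $O(\log n)$ and that the conclusion of polynomial complexity still holds. If the column ordering is known in advance (e.g.\ lexicographic), the sort can be skipped and the partner index can be computed directly, which removes the $\log n$ factor; this matches the paper's remark that a known column order makes the decoder faster.
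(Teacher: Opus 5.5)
Your argument is the same as the paper's. In Part I, for each of the $k-1$ indices in $I$ you do at most $n$ partner lookups by binary search at $O(\log n)$ each, and Part II costs $O(nk)$, which is absorbed. Your one-time sort of integer keys and the partner key $\mathrm{key}(j)+2^{i-1}$ (present by Lemma~\ref{lemma_parejas}) spell out what the paper's binary search tacitly requires, and the $O(n\log n)$ sort is absorbed as well.

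However, one auxiliary claim is false: $n\ge 2^{m-1}$, and hence $k\le\log_2 n+1$, does not hold in general. A vertex outside one maximal face can lie in another, so the sets containing it need not belong to $\Delta^c$.
\begin{itemize}
\item The example $\Delta=\langle\{1,2,3,4,5\},\{6\}\rangle$ from the paper's table gives $n=31<2^5$.
\item More drastically, let $\Delta$ be all subsets of $[m]$ of size at most $m-2$, with $m\ge 4$. Then $\Delta^c$ gives the $m+1$ columns $\mathbf 1+e_i$ and $\mathbf 1$. These span $\mathbb F_2^m$, so $k=m=n-1$, and $k$ is linear in $n$.
\end{itemize}
In that regime your word-RAM justification fails, and so does your remark that the bit-cost factor $k$ is $O(\log n)$. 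Under bit costs your argument only gives $O(k(k-1)n\log n)$.

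The fix is to adopt the model the paper implicitly uses, in which comparing two columns counts as one operation. The bound $O((k-1)n\log n)$ then follows exactly as you argue, and polynomiality follows from $k\le n$, which you also note.
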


\begin{proof}
The complexity of the algorithm is determined by two distinct phases:

\begin{enumerate}
    \item \textbf{Recovery of the first $k-1$ bits:} For each bit $u_i$ (where $i \neq i_0$), the algorithm scans the columns of $G^T$ where the $i$-th entry is $0$. There are at most $n$ such columns. For each column, identifying its pair via binary search costs $\mathcal{O}(\log n)$. Thus, recovering $k-1$ bits takes $\mathcal{O}((k-1) n \log n)$.
    
    \item \textbf{Recovery of the last bit:} To recover the bit $u_{i_0}$, the algorithm considers all columns $G_{\cdot,s}$ where the value $g_{i_0,s}$ is $1$ (at most $n$ columns). For each column, the value $\sum_{j \neq i_0} u_j g_{j,s}$ can be computed in $\mathcal{O}(k)$ time. Processing all such columns takes $\mathcal{O}(nk)$ time.
\end{enumerate}

By combining both phases, the overall running time is $\mathcal{O}((k-1)n\log n)$.

\end{proof}

\begin{remark}
If the indexing of the columns is known in advance, locating the required pairs takes $\mathcal{O}(1)$ time, reducing the overall complexity to $\mathcal{O}((k-1)n)$. In particular, if the generator matrix is constructed in standard lexicographical order, as illustrated in Example~\ref{example:1}, the indexing is immediately available, allowing the required column pairs to be located in constant time during decoding. This advantage is analogous to that of a generator matrix in standard form in classical coding theory. Hence, the decoding algorithm runs in polynomial time with respect to the code length.
\end{remark}

\begin{remark}
    A parallelisation process can reduce the complexity to $\mathcal{O}(\log n)$.
\end{remark}


    


\subsection{Error Correction Capacity Analysis}

\begin{lemma} \label{lemma:zerosposition}
Let $i\in[m]$. The number of columns of $G^T$ having a $0$ in the $i$-th coordinate is
\[
2^{m-1}-|\Delta_{-i}|.
\]
\end{lemma}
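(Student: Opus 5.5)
The columns of $G^T$ are the characteristic vectors $\chi_\sigma$ with $\sigma\in\Delta^c$, so the quantity to compute is
\[
\#\{\sigma\in\Delta^c : i\notin\sigma\}.
\]
The plan is to split $\mathcal{P}([m])$ according to whether a subset contains $i$, and then subtract the part lying in $\Delta$. Concretely, the subsets of $[m]$ not containing $i$ are exactly the subsets of $[m]\setminus\{i\}$, and there are $2^{m-1}$ of them. Among these, the ones lying in $\Delta$ form, by definition, the subcomplex $\Delta_{-i}=\{\sigma\in\Delta: i\notin\sigma\}$. Since $\Delta^c=\mathcal{P}([m])\setminus\Delta$, the set we want is the disjoint difference
\[
\{\sigma\subseteq[m]: i\notin\sigma\}\setminus\Delta_{-i}.
\]
Its cardinality is therefore $2^{m-1}-|\Delta_{-i}|$, because $\Delta_{-i}$ is contained in the first set.

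The only point needing care is that distinct faces give distinct columns, so counting columns is the same as counting faces. This holds because $\sigma\mapsto\chi_\sigma$ is injective. One should also check that the $i$-th coordinate of $\chi_\sigma$ is $0$ exactly when $i\notin\sigma$, which is immediate from the definition of $\chi_\sigma$. No real obstacle is expected; the statement is a direct counting identity.

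As a sanity check, the count can be compared with Lemma~\ref{lemma_parejas}. The columns with a $0$ in position $i$ inject into the columns with a $1$ in position $i$, via $\sigma\mapsto\sigma\cup\{i\}$. Consistently, the number of columns with a $1$ in position $i$ is $2^{m-1}-\#\{\sigma\in\Delta: i\in\sigma\}$, which is at least $2^{m-1}-|\Delta_{-i}|$, since $\tau\mapsto\tau\setminus\{i\}$ maps the faces of $\Delta$ containing $i$ injectively into $\Delta_{-i}$.
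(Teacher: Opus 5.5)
Your proposal is correct and follows essentially the same route as the paper: you identify the relevant columns with $\{\sigma\in\Delta^c : i\notin\sigma\}=\mathcal{P}([m]\setminus\{i\})\setminus\Delta_{-i}$ and count $2^{m-1}-|\Delta_{-i}|$. Your extra remarks are valid but not needed: injectivity of $\sigma\mapsto\chi_\sigma$ holds, though columns are indexed by faces anyway, and the consistency check against Lemma~\ref{lemma_parejas} is correct.
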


\begin{proof}
Let $C_i$ denote the set of subsets corresponding to the columns of $G^T$ with a $0$ in the $i$-th coordinate. By the definition of $G$, these are exactly the faces of $\Delta^c$ that do not contain $i$,
\[
C_i = \{ \sigma \in \Delta^c : i \notin \sigma \}.
\]
We can rewrite this set as
\[
C_i = \mathcal{P}([m] \setminus \{i\}) \setminus \{ \sigma \in \Delta : i \notin \sigma \},
\]
Recognizing that $\{ \sigma \in \Delta : i \notin \sigma \}$ is exactly the subcomplex $\Delta_{-i}$, we obtain:
\[
|C_i| = |\mathcal{P}([m] \setminus \{i\})| - |\Delta_{-i}| = 2^{m-1} - |\Delta_{-i}|.
\]
\end{proof}

\begin{theorem} \label{the1}
Let $\Delta$ be a simplicial complex on $[m]$, and let  $C_{\Delta^c} = \{u G^T : u \in \mathbb{F}_2^m\}$ be the associated anticode, where $G$ is the incidence matrix of $\Delta^c$. By considering the decoding procedure described above, the algorithm corrects every error vector of weight
\[
t < \frac{2^{m-1} - |\Delta_{-i}|}{2} \quad \text{for all } i \in [m] \setminus \{i_0\},
\]
where $\Delta_{-i} = \{\sigma \in \Delta : i \notin \sigma\}$ and $i_0 \in [m]$ is chosen such that 
\[
|\Delta_{-i_0}| \geq |\Delta_{-i}| \quad \forall i \in [m].
\]
In particular, the algorithm's error-correction capability is given by
\[
t = \left\lfloor \frac{2^{m-1} - |\Delta_{-i_1}| - 1}{2} \right\rfloor,
\]
where
\[
|\Delta_{-i_1}| = \max\{ |\Delta_{-i}| : i \in [m] \setminus \{i_0\} \}.
\]
\end{theorem}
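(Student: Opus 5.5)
The plan is to analyse the two phases of the decoder separately and show that, under the stated bound on $t$, every majority vote returns the correct bit. Fix an error vector $e$ with $w(e)=t$ and write $v=uG^T+e$.

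\textbf{Phase 1 (bits $i\neq i_0$).} For $i\in I$, Lemma~\ref{lemma_parejas} pairs each column with a $0$ in position $i$ with a column that differs only there. The map $\sigma\mapsto\sigma\cup\{i\}$ is injective, so these pairs are pairwise disjoint. By Lemma~\ref{lemma:zerosposition} there are exactly $N^{(i)}:=2^{m-1}-|\Delta_{-i}|$ pairs.

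By the first Proposition of the subsection, a pair $(j,j')$ votes for the wrong value of $u_i$ exactly when $e_j+e_{j'}=1$, that is, when the pair contains exactly one error position. Since the pairs are disjoint, each nonzero coordinate of $e$ lies in at most one pair. Hence the number of wrong votes is at most $t$, and the number of correct votes is at least $N^{(i)}-t$.

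The hypothesis $t<N^{(i)}/2$ gives $t<N^{(i)}-t$. So the correct value strictly wins the majority, and $u_i$ is recovered for every $i\in I$. The strict inequality matters, because the algorithm breaks ties toward $0$.

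\textbf{Phase 2 (bit $i_0$).} Now all $u_j$ with $j\neq i_0$ are known to be correct. By the second Proposition, every column $s$ with $g_{i_0,s}=1$ gives $\hat v_s=u_{i_0}+e_s$, so it errs only if $e_s=1$. The number of such columns is $|\Delta^c|-(2^{m-1}-|\Delta_{-i_0}|)$.

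Each face $\sigma\in\Delta^c$ with $i_0\notin\sigma$ gives $\sigma\cup\{i_0\}\in\Delta^c$, as in the proof of Lemma~\ref{lemma_parejas}. Therefore the number of columns containing $i_0$ is at least $2^{m-1}-|\Delta_{-i_0}|$. We also need this to be at least $2^{m-1}-|\Delta_{-i_1}|$, which is the part requiring care.

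\textbf{Main obstacle.} I would handle it by counting directly. The columns containing $i_0$ correspond to the sets $\tau\subseteq[m]\setminus\{i_0\}$ with $\tau\cup\{i_0\}\notin\Delta$. Their number is $2^{m-1}-|\mathrm{lk}_\Delta(i_0)|$. Since $\mathrm{lk}_\Delta(i_0)\subseteq\Delta_{-i_0}$, this count is at least $2^{m-1}-|\Delta_{-i_0}|$.

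To compare with $i_1$, I would use $|\Delta|=|\Delta_{-i}|+|\mathrm{lk}(i)|$. Then $|\mathrm{lk}(i_0)|=|\Delta|-|\Delta_{-i_0}|\le|\Delta|-|\Delta_{-i_1}|$. I expect the needed comparison to follow from this together with the maximality of $|\Delta_{-i_0}|$. If that fails, I would bound the number of erroneous votes by $t$ and note that $t<\tfrac12(2^{m-1}-|\Delta_{-i_1}|)$. I would then verify that $2^{m-1}-|\mathrm{lk}(i_0)|\ge 2^{m-1}-|\Delta_{-i_1}|$ using $|\mathrm{lk}(i_0)|\le|\Delta_{-i_0}|$ and the monotonicity $\mathrm{lk}(i)\cong$ a subcomplex of $\Delta_{-i}$; this step is the one I am least sure of.

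Once it holds, the majority in Phase 2 is again correct.

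\textbf{Final formula.} The conditions from Phase 1 over all $i\neq i_0$ reduce to the one with the smallest $N^{(i)}$, namely $i=i_1$. The largest integer $t$ with $2t<2^{m-1}-|\Delta_{-i_1}|$ is $\lfloor(2^{m-1}-|\Delta_{-i_1}|-1)/2\rfloor$, which is the stated error-correction capability.
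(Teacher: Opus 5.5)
Your proposal is correct once one line is added. The step you flagged as uncertain follows from two facts you already stated. Maximality of $|\Delta_{-i_0}|$ together with $|\Delta|=|\Delta_{-i}|+|\mathrm{lk}(i)|$ gives $|\mathrm{lk}(i_0)|\le|\mathrm{lk}(i_1)|$. The inclusion $\mathrm{lk}(i_1)\subseteq\Delta_{-i_1}$ gives $|\mathrm{lk}(i_1)|\le|\Delta_{-i_1}|$. Hence the number $\alpha_{i_0}$ of columns containing $i_0$ satisfies $\alpha_{i_0}=2^{m-1}-|\mathrm{lk}(i_0)|\ge 2^{m-1}-|\Delta_{-i_1}|$. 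Write this chain explicitly instead of hedging. Your fallback inequality $|\mathrm{lk}(i_0)|\le|\Delta_{-i_0}|$ points the wrong way on its own, because $|\Delta_{-i_0}|\ge|\Delta_{-i_1}|$. It only helps once combined with the identity, giving $|\mathrm{lk}(i_0)|\le|\Delta|/2\le|\Delta_{-i_1}|$.

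Phase 1 follows the paper's argument but is more rigorous. The paper simply asserts that $2^{m-1}-|\Delta_{-i}|$ pairs let a majority vote withstand fewer than half as many errors. You supply the two missing points: the pairs are disjoint, so each error spoils at most one vote, and the inequality must be strict because ties go to $0$.

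Phase 2 is where the routes differ. The paper notes $\alpha_i\ge n/2$, since the pairing of Lemma~\ref{lemma_parejas} injects columns without $i$ into columns with $i$. It then invokes $d\le n/2$ to get $\alpha_{i_0}\ge d$, and concludes that the last bit survives $\lfloor (d-1)/2\rfloor$ errors. This leaves implicit that the Phase 1 bound never exceeds $\lfloor (d-1)/2\rfloor$. Your link count compares $\alpha_{i_0}$ directly with the Phase 1 bottleneck and never involves $d$, which is more self-contained.

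In fact the paper's auxiliary claim $d\le n/2$ fails for the simplex code: there $\Delta=\{\varnothing\}$ and $d=2^{m-1}>(2^m-1)/2$. The conclusion $\alpha_i\ge d$ still holds trivially, because $\alpha_i$ is the weight of the $i$-th row of $G^T$, which is a codeword.

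The two routes ultimately meet. The statement $|\mathrm{lk}(j)|\le|\Delta|/2\le|\Delta_{-i}|$ for all $i,j$ is exactly the paper's observation rewritten as $\alpha_j\ge n/2\ge n-\alpha_i$, using $n=2^m-|\Delta|$. So the Phase 2 comparison holds for any choice of last coordinate. The maximality of $|\Delta_{-i_0}|$ matters only for removing the coordinate with the fewest pairs from Phase 1.
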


\begin{proof}
By Lemma \ref{lemma:zerosposition}, the number of columns with $0$ in position $i$ equals $2^{m-1} - |\Delta_{-i}|$. Consequently, for the first $k-1$ bits, the algorithm corrects every error vector of weight:
\[
t < \frac{2^{m-1} - |\Delta_{-i}|}{2}.
\]
Let $\alpha_i$ denote the number of columns, $c$, in the generator matrix $G^T$ with $1$ in position $i$, that is, $\alpha_{i} = \#\{ c \in G^T : c_i = 1 \}$. When decoding the last bit, we can correct up to $\lfloor \frac{\alpha_i - 1}{2} \rfloor$ errors. Let $n$ denote the length of the code and $d$ its minimum distance. Since the minimum distance of these codes never exceeds half of the length (i.e., $d \leq n/2$) and we always have $\alpha_i \geq n/2$, it follows that $\alpha_i \geq d$. Consequently, this decoding step can consistently outperform the standard $\lfloor \frac{d - 1}{2} \rfloor$ correction bound.

To optimize the algorithm, we choose the element $i_0 \in [m]$ that satisfies $|\Delta_{-i_0}| \geq |\Delta_{-i}|$ for all $i \in [m]$ (this means that $u_{i_{0}}$ corresponds to the "weakest" bit with the fewest available pairs) and decode it last. This determines the overall error-correction capability of the algorithm:
\[
t = \left\lfloor \frac{2^{m-1} - |\Delta_{-i_1}| - 1}{2} \right\rfloor,
\]
where $|\Delta_{-i_1}| = \max \{ |\Delta_{-i}| : i \in [m] \setminus \{i_0\} \}$, concluding the proof.
\end{proof}

\begin{example} 
Let $\Delta$ be the simplicial complex on the vertex set $[6]$ whose maximal faces are
\[
\mathcal{F}=\{\{1,2,3\},\{3,4\},\{4,5,6\}\}.
\]
Since removing either vertex $1$ or vertex $6$ produces isomorphic complexes, we have
\[
|\Delta_{-i_0}|=|\Delta_{-i_1}|=12.
\]
Hence, the error-correction capability guaranteed by the algorithm is
\[
t=\left\lfloor\frac{2^{6-1}-12-1}{2}\right\rfloor=9.
\]
On the other hand, the theoretical error-correction capability of the associated code is $11$.

Consider the information vector $u=(1,1,0,1,0,1)$, and let $c_{\Delta^c}(u)$ be its corresponding codeword. If errors are introduced at positions $\{0,1,6,7,11,12,13,15,16\}$,
 the algorithm successfully recovers the original information vector. However, if two additional errors are introduced at positions $2$ and $14$, the algorithm fails and returns $(1,0,1,1,0,1)$ instead of $u$.

Notice that the bound $t=9$ is only a guaranteed correction capability. In fact, for some error patterns with more weight, the algorithm still recovers the transmitted word successfully. For instance, it correctly decodes errors at positions $\{3, 5, 8, 11, 15, 16, 17, 21,24,26,27\}$.

\end{example}

The following result provides a criterion for determining whether decoding has been successful, even when the number of errors exceeds the error-correction capability.

\begin{proposition}
    Let $\Delta$ be a simplicial complex and let $C_{\Delta^c}$ be the associated anticode with parameters $[n,k,d]$. Let $t$ be the error correction capability of the algorithm, such that $t < \lfloor \frac{d-1}{2} \rfloor$. Let $v = uG^T + e = c_{\Delta^c}(u) + e$ be the received word. The algorithm corrects an error pattern $e \in \mathbb{F}_2^n$ with $t < w(e) \leq \lfloor \frac{d-1}{2} \rfloor$ if and only if $w(c_{\Delta^c}(D(v)) + v) \leq \lfloor \frac{d-1}{2} \rfloor$, where $D(v)$ denotes the algorithm output. 
\end{proposition}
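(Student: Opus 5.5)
The statement is essentially a uniqueness-of-decoding argument inside the packing radius. Write $t_d=\lfloor \frac{d-1}{2}\rfloor$. Say that the algorithm \emph{corrects} $e$ when $D(v)=u$. Since $u\mapsto uG^T$ is injective ($G^T$ has full row rank $k=m$: the columns of $G^T$ include $\{1,\dots,m\}$ and, by Lemma~\ref{lemma_parejas}, all the sets $[m]\setminus\{i\}$, whose characteristic vectors span $\mathbb{F}_2^m$), this is equivalent to $c_{\Delta^c}(D(v))=c_{\Delta^c}(u)$. Throughout, we work under the hypothesis $t<w(e)\le t_d$. I will prove the two implications separately.

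For the forward direction, assume $D(v)=u$. Then
\[
c_{\Delta^c}(D(v))+v=c_{\Delta^c}(u)+c_{\Delta^c}(u)+e=e,
\]
since addition is over $\mathbb{F}_2$. Hence $w(c_{\Delta^c}(D(v))+v)=w(e)\le t_d$ by hypothesis. This direction is immediate.

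For the reverse direction, assume $w(c_{\Delta^c}(D(v))+v)\le t_d$, and set $c'=c_{\Delta^c}(D(v))$ and $c=c_{\Delta^c}(u)$. Both $c$ and $c'$ lie within distance $t_d$ of $v$: $d(c,v)=w(e)\le t_d$ by hypothesis, and $d(c',v)\le t_d$ by assumption. By the triangle inequality for the Hamming distance,
\[
d(c,c')\le d(c,v)+d(v,c')\le 2t_d\le d-1<d.
\]
Both words are codewords of a linear code with minimum distance $d$, so $c=c'$. Injectivity of the encoding map then gives $D(v)=u$, so the algorithm corrects $e$.

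The only point requiring care is precisely this injectivity: the equivalence is between "$D(v)$ equals the transmitted message" and a condition on codewords, so we need $k=m$, i.e.\ that $G^T$ has rank $m$. The argument above supplies it from the columns present in $G^T$. Note also that $D(v)$ always lies in $\mathbb{F}_2^m$, since the algorithm outputs a full information vector, so $c'$ is a genuine codeword. Finally, the assumption $t<w(e)$ is never used; it only indicates the regime where the criterion is informative, since for $w(e)\le t$ correctness is already guaranteed by Theorem~\ref{the1}.
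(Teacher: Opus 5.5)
Your main argument is correct and is the paper's. The forward direction is identical. The paper proves the converse by contraposition: $c'\neq c$ forces $w(c'+v)\ge d-w(e)>\lfloor\frac{d-1}{2}\rfloor$. That is your bound $d(c,c')\le 2\lfloor\frac{d-1}{2}\rfloor<d$ read the other way. You are also right that the hypothesis $t<w(e)$ is never used.

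The step that fails is your proof that $u\mapsto uG^T$ is injective.
\begin{itemize}
\item Lemma~\ref{lemma_parejas} only says that $\Delta^c$ is closed under adding a vertex, so starting from $[m]$ it produces nothing. The set $[m]\setminus\{i\}$ is a column exactly when it is not a face of $\Delta$. This already fails in the paper's Reed--Muller example $\Delta=\langle\{1,2,3\}\rangle$ on $[4]$, where $\{1,2,3\}\in\Delta$.
\item Full rank itself can fail. Suppose $[m]\setminus\{i\}$ and $[m]\setminus\{j\}$ are both faces for some $i\neq j$. Then every $\sigma\in\Delta^c$ contains $i$ and $j$, so rows $i$ and $j$ of $G^T$ coincide, and $u$ and $u+\chi_{\{i,j\}}$ have the same codeword.
\item A concrete instance: $\Delta=\langle\{1,2,3,4\},\{1,2,3,5\}\rangle$ on $[5]$ gives an $[8,4,4]$ code. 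There the decoder picks $i_0=4$ and always sets $u_5=0$, since no column has a $0$ in row $5$. Hence $D(v)\neq u$ whenever $u_5=1$, even though $c_{\Delta^c}(D(v))=c_{\Delta^c}(u)$.
\item What is true, for nonempty $\Delta^c$: $\chi_{\{i\}}=\chi_{\sigma\cup\{i\}}+\chi_\sigma$ lies in the column span whenever some column $\sigma$ misses $i$. Since $\chi_{[m]}$ is a column, the rank is $m$ if and only if at most one vertex $i$ has $[m]\setminus\{i\}\in\Delta$.
\end{itemize}

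So drop the claim that injectivity follows from the construction, and use one of two fixes. Either invoke the standing assumption $k=m$ built into the statement (the decoder returns $(u_1,\dots,u_k)$). This is exactly what the paper does tacitly when it treats failure $D(v)\neq u$ as $c_{\Delta^c}(D(v))\neq c_{\Delta^c}(u)$. Or read ``corrects'' as ``returns the transmitted codeword'', in which case no injectivity is needed.
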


\begin{proof}
    If the decoding is successful, $D(v) = u$. Therefore, $w(c_{\Delta^c}(D(v)) + v) = w(c_{\Delta^c}(u) + c_{\Delta^c}(u) + e) = w(e) \leq \lfloor \frac{d-1}{2} \rfloor$.

    Otherwise, if $c_{\Delta^c}(D(v)) = v' \neq c_{\Delta^c}(u)$, we have that $w(v' + c_{\Delta^c}(u)) \geq d$. Since $v = c_{\Delta^c}(u) + e$, we obtain $w(v' + v + e) \geq d$. Therefore, $w(v' + v)  > \lfloor \frac{d-1}{2} \rfloor$.
\end{proof}

\subsubsection{Comparison between theoretical and practical error correction}

Although the algorithm does not always achieve the maximum theoretical correction capability (i.e., $t < (d-1)/2$), it provides an explicit, easily calculable upper bound on the number of correctable errors. Furthermore, as the code length tends to infinity, the capability guaranteed by the algorithm converges to the theoretical correction capability, provided that the dimension of the maximal sets remains bounded.

\begin{theorem} \label{the2}
 Let $C_{\Delta^c}^n$ be the anticode of length $n$ constructed from the simplicial complex $\Delta=\langle A_1,...,A_s\rangle$ on the vertex set $[k]$, $t_n$ denote the error-correction capability guaranteed by the algorithm, and $t_n^{\mathrm{theo}}=\left\lfloor \frac{d_n-1}{2} \right\rfloor$ be the maximum theoretical error-correction capability of $C_{\Delta^c}^n$, where $d_n$ is its minimum distance. 
Assume that the dimension of the maximal sets remains bounded. Then
\[
    \lim_{n\to\infty}  \frac{t_n}{t_n^{\mathrm{theo}}} = 1.
\]
\end{theorem}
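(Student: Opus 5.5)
We interpret the regime $n\to\infty$ as $k\to\infty$ (the vertex set is $[k]$ and $n=2^k-|\Delta|$), with $s$ and $\max_j|A_j|\le D$ controlled. The plan is to bound $t_n$ from below and $t_n^{\mathrm{theo}}$ from above by the same leading term $2^{k-2}$, and show that both error terms are $o(2^{k})$.

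\emph{Upper bound on the theoretical capability.} Take $u=e_i$, a unit vector. Its codeword weight is the number of $\sigma\in\Delta^c$ containing $i$, namely $2^{k-1}-|\{\sigma\in\Delta: i\in\sigma\}|\le 2^{k-1}$. Hence $d_n\le 2^{k-1}$ and $t_n^{\mathrm{theo}}\le 2^{k-2}$. Alternatively, $d_n\le n/2\le 2^{k-1}$ also gives this bound.

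\emph{Lower bound on $t_n$.} By Theorem~\ref{the1}, $t_n\ge \frac{2^{k-1}-|\Delta_{-i_1}|-1}{2}-1$. Since $\Delta_{-i_1}\subseteq\Delta$ and every face lies in some $A_j$, we have $|\Delta_{-i_1}|\le|\Delta|\le \sum_{j=1}^s 2^{|A_j|}\le s\,2^{D}$. Therefore
\[
\frac{t_n}{t_n^{\mathrm{theo}}}\ \ge\ \frac{2^{k-2}-s2^{D-1}-2}{2^{k-2}} .
\]
Under the hypothesis that $D$ is bounded, this tends to $1$ provided $s\,2^{D}=o(2^k)$. In the complementary direction we need $t_n\le t_n^{\mathrm{theo}}$. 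This holds because the algorithm's guaranteed correction radius cannot exceed the unique-decoding radius, since two codewords at distance $d_n$ cannot both be decoded from a midpoint. Alternatively, one can simply bound the ratio above by $\frac{2^{k-2}}{t_n^{\mathrm{theo}}}$ and show $t_n^{\mathrm{theo}}\ge 2^{k-2}-O(s2^D)$. For that, use Lemma~\ref{lemma1}: any nonzero $u$ has exactly $2^{k-1}$ subsets of $[k]$ with odd intersection, of which at most $|\Delta|$ lie in $\Delta$. So $d_n\ge 2^{k-1}-|\Delta|$, which squeezes $t_n^{\mathrm{theo}}$ between $2^{k-2}-O(|\Delta|)$ and $2^{k-2}$.

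\emph{Main obstacle.} The delicate point is controlling $|\Delta|$ relative to $2^k$. Bounded dimension alone bounds each $2^{|A_j|}$, but $s$ may grow with $k$: for instance, with $D$ fixed, $s$ can be as large as $\binom{k}{D}$, which is polynomial in $k$. Even so, $|\Delta|\le\sum_{r\le D}\binom{k}{r}=O(k^D)=o(2^k)$ in all cases, so the argument goes through. I would state this polynomial bound explicitly, since it is what makes both error terms negligible. The limit then follows from the squeeze
\[
\frac{2^{k-2}-O(k^D)}{2^{k-2}}\le\frac{t_n}{t_n^{\mathrm{theo}}}\le 1 .
\]
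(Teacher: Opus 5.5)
Your proof is correct, and its core is the paper's argument. You bound $|\Delta_{-i_1}|\le|\Delta|\le\sum_j 2^{|A_j|}$ to get $t_n\ge 2^{k-2}-O(|\Delta|)$. You then use $d_n\le 2^{k-1}$, obtained via the explicit codeword $e_iG^T$; the paper gets the same bound from Lemma~\ref{lemma1} and the fact that deleting the columns indexed by $\Delta$ from the simplex code can only lower weights.

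Your proposal goes beyond the paper's proof in two useful ways. First, the paper only shows $\lim t_n/t_n^{\mathrm{theo}}\ge 1$. It tacitly assumes both that the limit exists and that $t_n\le t_n^{\mathrm{theo}}$. You supply the missing upper half of the squeeze in two ways. One is the unique-decoding argument: $2t_n<d_n$ for any decoder that provably corrects all patterns of weight at most $t_n$. The other combines $t_n\le 2^{k-2}$ with the Lemma~\ref{lemma1} estimate $d_n\ge 2^{k-1}-|\Delta|$, and does not even depend on the validity of Theorem~\ref{the1}. Second, the paper treats its error term $(\gamma+1)s$ as a constant, implicitly taking $s$ bounded. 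Your estimate $|\Delta|\le\sum_{r\le D}\binom{k}{r}=O(k^D)$ shows that bounded dimension alone suffices, as the statement claims.

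You also correctly read the limit as $k\to\infty$, whereas the paper's proof writes $2^{n-1}$ with $n$ standing for the number of vertices. One small correction: your aside ``$d_n\le n/2$'' is false in general. For the simplex code, $\Delta=\{\varnothing\}$, one has $d=2^{k-1}>(2^k-1)/2$. Nothing in your proof depends on it.
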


\begin{proof}
Let
\[
    \gamma=\max\{2^{|A_i|}: i\in [s]\}.
\]
Since the dimension of the maximal sets is bounded, the constant $\gamma$ does not depend on $n$. Then
\[
    |\Delta_{-i_0}| \leq \sum_{i=1}^s 2^{|A_i|} <(\gamma+1)s,
\]
for all $i_0 \in [k]$. Therefore, the correcting capability guaranteed by the algorithm satisfies
\[
    t_n \geq \frac{2^{n-1}-(\gamma+1)s}{2}.
\]

The distance of a code of this type satisfies $d_n \le 2^{n-1}$. 
This bound follows from Lemma \ref{lemma1} and from the fact that the construction of the anticode involves removing from its generator matrix the columns corresponding to the vectors of the simplicial complex.

Consequently,
\[
    t_n^{\mathrm{theo}} =\left\lfloor \frac{d_n-1}{2} \right\rfloor \le 2^{n-2}-1.
\]

Hence,
\[
    \lim_{n\to\infty}\frac{t_n}{t_n^{\mathrm{theo}}}\geq \lim_{n\to\infty} \frac{t_n}{2^{n-2}-1} \ge \lim_{n\to\infty}\frac{2^{n-1}-(\gamma+1)s}{2(2^{n-2}-1)} =1.
\]
\end{proof}

To illustrate the previous theorem, we carried out a computational simulation. For each value of $k\in\{5,\ldots,11\}$, a set of 50 simplicial complexes with randomly generated maximal faces of bounded dimension was computationally generated. For each individual complex, we computed both the theoretical error-correction capability and the actual error-correction capability achieved by the algorithm. The results were then averaged for each dimension. Table~\ref{tab:simulation} presents these average values, while Figure~\ref{fig:simulation} depicts the convergence of the average ratio $t^{\mathrm{real}}/t^{\mathrm{theo}}$ towards~$1$ as the number of vertices increases.

\begin{table}[h]
\centering
\caption{Average error-correction capabilities obtained from the simulation.}
\label{tab:simulation}
\begin{tabular}{|c|c|c|c|}
\hline
\rowcolor{gray!30}
Dimension &
Average $t^{\mathrm{real}}$ &
Average $t^{\mathrm{theo}}$ &
Average ratio ${t^{\mathrm{real}}}/{t^{\mathrm{theo}}}$\\
\hline
7  & 8.82    & 10.94   & 0.8047\\
8  & 32.46   & 37.66   & 0.8593\\
9  & 87.56   & 96.20   & 0.9096\\
10 & 208.06  & 220.54  & 0.9432\\
11 & 453.00  & 469.38  & 0.9650\\
12 & 960.54  & 979.62  & 0.9805\\
13 & 1974.00 & 1997.28 & 0.9883\\
14 & 4014.30 & 4041.06 & 0.9934\\
\hline
\end{tabular}
\end{table}

\begin{figure}[H]
    \centering
    \includegraphics[width=1\textwidth]{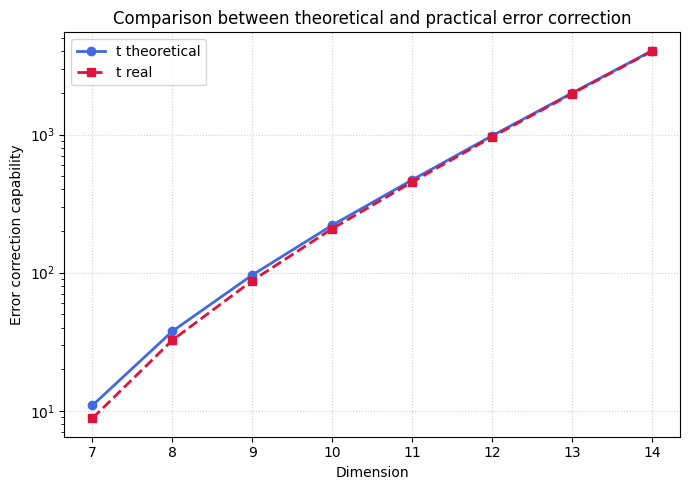}
    \caption{ The ratio approaches $1$ as the number of vertices increases, as predicted in the Theorem~\ref{the2}.}
    \label{fig:simulation}
\end{figure}

\textit{Note:} Although convergence occurs when the code length increases, there may be cases where the theoretical and practical error correction capabilities coincide for short lengths. This can be seen in the table below.

\begin{table}[h]
\centering
\caption{Some specific examples of the algorithm error-correcting capability.}
\renewcommand{\arraystretch}{1.15}
\begin{tabular}{|c|c|c|c|c|}
\hline
\rowcolor{gray!30}
{Set of maximal faces} & $[n,k,d]$ & $t^{\mathrm{real}}$ & $t^{\mathrm{theo}}$ &
${t^{\mathrm{real}}}/{t^{\mathrm{theo}}}$\\
\hline
\multicolumn{5}{|c|}{\cellcolor{gray!15}\textbf{Algorithm attains the theoretical bound}}\\
\hline
$\{\{1,2\},\{3,4\}\}$
& $[9,4,4]$ & $1$ & $1$ & $1$\\
\hline
$\{\{1,2,3,4\},\{2,3,4,5\},\{2,3,6\}\}$
& $[36,6,12]$ & $5$ & $5$ & $1$\\
\hline
$\{\{1,2,3,4,5\},\{6\}\}$
& $[31,6,15]$ & $7$ & $7$ & $1$\\
\hline
$\{\{1,2,3,4,5,6\},\{7\}\}$
& $[63,7,31]$ & $15$ & $15$ & $1$\\
\hline
\multicolumn{5}{|c|}{\cellcolor{gray!15}\textbf{Algorithm does not attain the theoretical bound}}\\
\hline
$\{\{1,2,3\},\{3,4,5\}\}$
& $[18,5,8]$ & $2$ & $3$ & $0.666...$\\
\hline
$\{\{1,2,3\},\{3,4\},\{4,5,6\}\}$
& $[48,6,23]$ & $9$ & $11$ & $0.818...$\\
\hline
$\{\{1,2,3\},\{4,5,6\}\}$
& $[49,6,24]$ & $10$ & $11$ & $0.909...$\\
\hline
$\{\{1,2,3\},\{3,4,5\},\{5,6,7\}\}$
& $[108,7,52]$ & $23$ & $25$ & $0.92$\\
\hline
$\{\{1,2\},\{2,3\},\{3,4\},\{4,5\},\{5,6\},\{6,7\},\{7,1\}\}$
& $[113,7,54]$ & $25$ & $26$ & $0.961...$\\
\hline
$\{\{1,2,3,4,5\},\{6,7,8,9,10\}\}$
& $[961,10,480]$ & $232$ & $239$ & $0.970...$\\
\hline
\end{tabular}
\end{table}
\FloatBarrier

\subsection{Optimal Error-Correction on Certain Families of Simplicial Complexes}

Although a complete characterization of the simplicial complexes for which the algorithm attains the theoretical error-correction capacity remains unknown, our geometric approach allows us to identify several families for which this optimal performance can be proved. Such a characterization is challenging because the minimum distance of the associated anticode is not known in general. In this section, we therefore restrict our attention to simplicial complexes whose set of maximal faces $\{A_1,\ldots,A_s\}$ satisfies $A_i\setminus\bigcup_{j\in[s]\setminus\{i\}}A_j\neq\varnothing$ for every $i\in[s]$; that is, each maximal face contains at least one vertex that does not belong to any other maximal face. Unless otherwise stated, we consider the codes over the set of vertices formed by the union of the maximal faces $\bigcup_{i \in [s]}A_i$.

\begin{theorem} 
Let $C_{\Delta^c}$ be the anticode associated with a simplicial complex of type $\Delta=\langle A_1, A_2\rangle$, where $A_1$ and $A_2$ are maximal faces such that $\dim(A_1)=k-1$, where $k>2$, $\dim(A_2)=1$, and $A_1\cap A_2=\varnothing $. Then, the algorithm attains the theoretical error-correction capability.
\end{theorem}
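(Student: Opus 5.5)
The plan is to compute both quantities in closed form and check that they agree. On one side is the capability guaranteed by Theorem~\ref{the1}; on the other is the theoretical capability $\lfloor (d-1)/2\rfloor$ obtained from the parameter formula in the Preliminaries.

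\emph{Reading of the hypothesis.} I read it so that the vertex set $\bigcup A_i$ has $k$ elements, $k$ being the dimension of the code, with $|A_1|=k-1$ and $A_2=\{v\}$ a single vertex. This matches the tabulated instances $\{\{1,\dots,5\},\{6\}\}$, giving $[31,6,15]$, and $\{\{1,\dots,6\},\{7\}\}$, giving $[63,7,31]$. Under the literal convention $|A_2|=2$, the counts below do not match: for $|A_1|=3$ one gets $t=3$ while $t^{\mathrm{theo}}=4$. So the first step of the write-up fixes this reading.

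\emph{Parameters of the code.} Since $A_1\cap A_2=\varnothing$, each maximal face has a private vertex. The $[\,2^m-|\Delta|,\;m,\;2^{m-1}-\sum_i 2^{|A_i|-1}]$ formula from the Preliminaries therefore applies with $m=k$. We have $|\Delta|=2^{k-1}+2-1=2^{k-1}+1$, because the empty face is shared. Hence $n=2^{k-1}-1$ and
\[
d=2^{k-1}-\bigl(2^{k-2}+1\bigr)=2^{k-2}-1.
\]
Because $k>2$, the number $2^{k-2}$ is even, so $t^{\mathrm{theo}}=\lfloor (2^{k-2}-2)/2\rfloor=2^{k-3}-1$.

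\emph{The values $|\Delta_{-i}|$.} These follow from the inclusion--exclusion lemma, or directly:
\begin{itemize}
\item For $i\in A_1$, the subcomplex $\Delta_{-i}$ is generated by $A_1\setminus\{i\}$ and $\{v\}$, so $|\Delta_{-i}|=2^{k-2}+2-1=2^{k-2}+1$.
\item For $i=v$, we get $\Delta_{-v}=\mathcal P(A_1)$, so $|\Delta_{-v}|=2^{k-1}$.
\end{itemize}
For $k\ge 3$ we have $2^{k-1}>2^{k-2}+1$. Hence Step~0 selects $i_0=v$, the vertex decoded last by Part~II. Every other vertex, in particular $i_1$, has $|\Delta_{-i_1}|=2^{k-2}+1$.

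\emph{Guaranteed capability.} Theorem~\ref{the1} gives
\[
t=\left\lfloor \frac{2^{k-1}-2^{k-2}-1-1}{2}\right\rfloor=\left\lfloor\frac{2^{k-2}-2}{2}\right\rfloor=2^{k-3}-1=t^{\mathrm{theo}}.
\]

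\emph{Main obstacle.} The step I expect to need care is justifying that the last bit $u_{v}$ does not lower the capability. Theorem~\ref{the1} only argues this loosely, via $\alpha_i\ge d$, so I would verify it directly in this case. The columns with a $1$ in coordinate $v$ are the sets $\sigma\ni v$ not in $\Delta$, that is, all of them except $\{v\}$. There are $2^{k-1}-1=n$ of them, so $\alpha_v=n\ge 2t+1$. Once the other $k-1$ bits are correct, which is guaranteed when $w(e)\le t$, each such column votes correctly unless it carries an error. Therefore at most $t<\alpha_v/2$ of these votes are wrong, and the majority in Part~II is correct.

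Combining the two phases, every error pattern of weight at most $t^{\mathrm{theo}}$ is corrected. This is the claim.
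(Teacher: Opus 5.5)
Your proposal is correct and follows the paper's argument. You compute $|\Delta_{-i_0}|=2^{k-1}$ by deleting the vertex of $A_2$, compute $|\Delta_{-i_1}|=2^{k-2}+1$ for a vertex of $A_1$, take $d$ from the formula in the Preliminaries, and substitute into Theorem~\ref{the1}. Your reading of $\dim$ as the number of vertices is the one the paper's proof implicitly uses, and your value $d=2^{k-2}-1$ is the right one: the paper writes $2^{k-1}-1$, which is a slip, since its own concluding equality $\lfloor (2^{k-2}-2)/2\rfloor=\lfloor (d-1)/2\rfloor$ requires $2^{k-2}-1$. Your direct check that the last bit $u_v$ is decoded correctly is a welcome extra that the paper leaves to Theorem~\ref{the1}.
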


\begin{proof}
  
 We obtain $|\Delta_{-i_0}|=2^{k-1}$ by deleting the vertex of $A_2$. Taking $i_1$ as a vertex contained only in $A_1$, then
    \[
        |\Delta_{-i_1}|=2^{k-2}+1.
    \]

    We can compute the distance using the formula in \cite{Hyun2020} as
    \[
      d=2^{k-1}-\sum_{i=1}^2 2^{|A_i|-1}=2^{k-1}-1.
    \]
    Using Theorem~\ref{the1}, we obtain that the error-correction capability is 
    \[
        t=\left \lfloor  \frac{2^{k-1}-|\Delta_{-i_1}|-1}{2} \right \rfloor= \left \lfloor  \frac{2^{k-1}-2^{k-2}-1-1}{2} \right \rfloor=\left \lfloor  \frac{d-1}{2} \right \rfloor=t^{theo}.
    \]
\end{proof}

\begin{theorem}
    Let $s\geq 3$ and let $C_{\Delta^c}$ be the anticode associated with a simplicial complex formed by $s$ $\alpha$-simplices sharing a common $(\alpha-1)$-face. Then, the algorithm attains the theoretical error-correction capability.
\end{theorem}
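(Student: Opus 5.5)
Write the complex as $\Delta=\langle A_1,\ldots,A_s\rangle$ with $A_j=F\cup\{v_j\}$. Here $F$ is the common $(\alpha-1)$-face, so $|F|=\alpha$, and $v_1,\ldots,v_s$ are the distinct apex vertices. Following the convention of this section, the vertex set is $[m]=F\cup\{v_1,\ldots,v_s\}$ with $m=\alpha+s$. The plan mirrors the proof of the previous theorem:
\begin{enumerate}
\item compute $d$ from the closed formula in the preliminaries;
\item compute $|\Delta_{-i}|$ for every vertex $i$, which identifies $i_0$ and $i_1$;
\item substitute into Theorem~\ref{the1} and check that the result equals $\lfloor (d-1)/2\rfloor$.
\end{enumerate}

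\textbf{The minimum distance.} Each $A_j$ has the private vertex $v_j$, so the hypothesis of the formula from the preliminaries holds. Hence
\[
d=2^{m-1}-\sum_{j=1}^s 2^{|A_j|-1}=2^{m-1}-s\,2^{\alpha}.
\]

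\textbf{Counting faces.} A face of $\Delta$ is either a subset of $F$, or a subset of $F$ together with exactly one apex. So the faces are enumerated directly, without inclusion--exclusion, and $|\Delta|=(s+1)2^{\alpha}$.
\begin{itemize}
\item If $i=v_j$ is an apex, deleting it removes exactly the $2^{\alpha}$ faces containing $v_j$. Hence $|\Delta_{-v_j}|=s\,2^{\alpha}$.
\item If $i\in F$, the faces avoiding $i$ are the subsets of $F\setminus\{i\}$, with or without one apex. Hence $|\Delta_{-i}|=(s+1)2^{\alpha-1}$.
\end{itemize}
Since $s\,2^{\alpha}\geq (s+1)2^{\alpha-1}$ is equivalent to $s\geq 1$, the maximum of $|\Delta_{-i}|$ is attained at an apex, and we may take $i_0=v_1$. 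Because $s\geq 2$ (indeed $s\geq 3$), there is a second apex $v_2\neq i_0$. The same comparison shows that $\max\{|\Delta_{-i}|: i\neq i_0\}$ is again attained at an apex, so $|\Delta_{-i_1}|=s\,2^{\alpha}$.

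\textbf{Conclusion.} Theorem~\ref{the1} then gives
\[
t=\left\lfloor \frac{2^{m-1}-s\,2^{\alpha}-1}{2}\right\rfloor=\left\lfloor\frac{d-1}{2}\right\rfloor=t^{theo}.
\]

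The step needing the most care is the second apex. The argument depends on there being an apex $v_2\neq i_0$ whose value $|\Delta_{-v_2}|$ equals the global maximum, so that $|\Delta_{-i_1}|$ is exactly $s\,2^{\alpha}$; this holds for any $s\geq 2$. The other point to verify is that the distance formula really applies. The weight-maximising choice of one private vertex per face is just $\{v_1,\ldots,v_s\}$, so this reduces to the stated formula with no further work.
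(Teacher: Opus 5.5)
Your proof is correct and follows essentially the paper's route. You apply the closed distance formula, compare the values $|\Delta_{-i}|$ to see that $i_0$ and $i_1$ can both be taken to be apex (private) vertices, and substitute into Theorem~\ref{the1}. Your direct counts $s\,2^{\alpha}$ versus $(s+1)2^{\alpha-1}$ replace the paper's unproved monotonicity remarks. Your numbers also agree with the paper's after the index shift $\alpha\mapsto\alpha-1$, since the paper's formulas treat an $\alpha$-simplex as having $\alpha$ vertices.

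The one thing you should add is where $s\geq 3$ is used. It is exactly the condition $s\,2^{\alpha}<2^{m-1}$, i.e.\ $d=2^{\alpha}(2^{s-1}-s)>0$. The paper checks this so that $C_{\Delta^c}$ really has dimension $m$ and the formula really gives its minimum distance. For $s=2$ this fails: the rows of $G^T$ indexed by $v_1$ and $v_2$ coincide. So the distance formula does not apply ``with no further work'', and the hypothesis is needed beyond merely ensuring a second apex.
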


\begin{proof}
Let $A_1,\ldots,A_s$ denote the $\alpha$-simplices.

The dimension is $k=\alpha-1+s$. The theoretical distance is equal to $\sum_{i=1}^s 2^{\alpha-1}$ provided that 
    \[
        \sum_{i=1}^s 2^{\alpha}< 2^{s+(\alpha-1)} \iff s\geq 3.
    \]

Now let $i_0,i_*\in A_i$, where $i_*$ also belongs to some $A_j$ with
$j\neq i$, while $i_0$ does not belong to any other maximal simplex.
Then
\[
|\Delta_{-i_*}|\le |\Delta_{-i_0}|.
\]
Hence, the maximum of $|\Delta_{-i}|$ is attained at a vertex belonging to
exactly one maximal simplex.

Moreover, if $i_r\in A_r$ and $i_t\in A_t$ are vertices contained in no
other maximal simplex and $|A_r|>|A_t|$, then
\[
|\Delta_{-i_r}|\le |\Delta_{-i_t}|.
\]
Since all maximal simplices have the same dimension $\alpha$, every vertex contained in exactly one maximal simplex yields the same value. Therefore, it suffices to consider any such vertex, say $i_1$.

A straightforward computation gives
\[
    |\Delta|=2^\alpha s-2^{\alpha-1}(s-1)
\]
and
\[
    |\Delta_{-i_1}|=2^{\alpha-1}s.
\]
Therefore,
\[
    t= \left\lfloor \frac{2^{\alpha-1+s}-|\Delta_{-i_1}|-1}{2} \right\rfloor
    = \left\lfloor \frac{d-1}{2} \right\rfloor =t^{theo}.
\]
\end{proof}

\begin{theorem}
    Let $\alpha\geq 1$, $s\geq 1$, and $0\leq \beta<\alpha$. Let $C_{\Delta^c}$ be the anticode associated with a simplicial complex consisting of $s$ $\alpha$-simplices sharing a common $(\alpha-1)$-face, together with a $\beta$-simplex attached to this common face along a $(\beta-1)$-face. Then, the algorithm attains the theoretical error-correction capability.
\end{theorem}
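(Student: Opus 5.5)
Throughout, fix notation for the complex. Let $F$ be the common $(\alpha-1)$-face, so $|F|=\alpha$. Let $a_1,\dots,a_s$ be the apexes, so the maximal $\alpha$-simplices are $A_j=F\cup\{a_j\}$. Let $G\subseteq F$ with $|G|=\beta$ be the $(\beta-1)$-face, and $b$ the extra vertex, so the $\beta$-simplex is $B=G\cup\{b\}$. The vertex set is $F\cup\{a_1,\dots,a_s,b\}$, so $k=\alpha+s+1$. Each maximal face has a private vertex ($a_j$ for $A_j$, and $b$ for $B$), so we are in the setting of this section. The plan mirrors the two previous theorems: compute $d$ from the formula of \cite{Hyun2020}, compute $|\Delta_{-i}|$ for every vertex, identify $i_0,i_1$, and plug into Theorem~\ref{the1}.

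\textbf{Step 1 (distance).} Recall that $w(c_{\Delta^c}(u))=2^{k-1}-\#\{\sigma\in\Delta : |\sigma\cap\operatorname{supp}(u)|\text{ odd}\}$. A face with odd intersection lies in some maximal face. The odd-intersecting subsets of a single maximal face number at most $2^{|A|-1}$ by Lemma~\ref{lemma1}, and the union over maximal faces is bounded by the sum. This bound is attained by taking $\operatorname{supp}(u)=\{a_1,\dots,a_s,b\}$, because the private vertices make the odd-intersecting sets of distinct maximal faces disjoint. Hence
\[
d=2^{k-1}-s\,2^{\alpha}-2^{\beta}.
\]
I will also note that this is positive for $\alpha\ge1$, $s\ge1$, $\beta<\alpha$.

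\textbf{Step 2 (counting $\Delta_{-i}$).} Partition $\Delta$ as follows: subsets of $F$ ($2^\alpha$ of them), faces containing $a_j$ ($2^\alpha$ for each $j$), and faces containing $b$ ($2^\beta$). This gives $|\Delta|=(s+1)2^\alpha+2^\beta$. Removing faces through a vertex then gives:
\[
|\Delta_{-a_j}|=s\,2^\alpha+2^\beta,\qquad |\Delta_{-b}|=(s+1)2^\alpha,
\]
\[
|\Delta_{-v}|=(s+1)2^{\alpha-1}+2^{\beta}\ \text{or}\ (s+1)2^{\alpha-1}+2^{\beta-1}\quad (v\in F).
\]
Since $2^\beta\le 2^{\alpha-1}<2^\alpha$, we get $|\Delta_{-b}|>|\Delta_{-a_j}|$. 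Moreover $|\Delta_{-a_j}|\ge|\Delta_{-v}|$ for $v\in F$, since $s2^\alpha\ge(s+1)2^{\alpha-1}$ when $s\ge1$. Thus $i_0=b$ and $i_1$ is any apex $a_j$.

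\textbf{Step 3 (conclusion).} By Theorem~\ref{the1},
\[
t=\left\lfloor\frac{2^{k-1}-s2^\alpha-2^\beta-1}{2}\right\rfloor=\left\lfloor\frac{d-1}{2}\right\rfloor=t^{\mathrm{theo}}.
\]

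The main obstacle I expect is Step 2: the comparison must hold uniformly, including the edge cases $s=1$, $\beta=0$ (where $G=\varnothing$ and $b$ is an isolated vertex), and $v\in G$ versus $v\in F\setminus G$. I will handle these by writing all values explicitly and comparing them directly. The justification of the distance formula in Step 1 is the other point requiring care, but the private-vertex argument settles it.
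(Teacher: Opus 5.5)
Your proof is correct and follows the same route as the paper: compute $d$ from the private-vertex maximum-weight formula, count $|\Delta|$ and $|\Delta_{-v}|$, identify $i_0=b$ and $i_1$ an apex, and observe that $|\Delta_{-i_1}|=\sum_i 2^{|A_i|-1}$, so that Theorem~\ref{the1} gives $t=\lfloor (d-1)/2\rfloor$. Your formulas differ from the paper's only by the shift $\alpha\mapsto\alpha+1$, $\beta\mapsto\beta+1$, because you use the standard convention that an $\alpha$-simplex has $\alpha+1$ vertices; your explicit comparison of all $|\Delta_{-v}|$, including $v\in F$ and the edge cases, fills in what the paper leaves to ``the previous reasoning''.
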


\begin{proof}
   The dimension is $k=(\beta+1)+(\alpha-1)=\alpha+\beta$. The distance satisfies
    \[
      d=2^{\alpha+s-1}-\sum_{i=1}^s 2^{\alpha-1}+2^{\beta-1},
    \]
    since $ \sum_{i=1}^{s+1}2^{|A_i|} =s2^\alpha+2^\beta <2^{s+\alpha}$.

    The inequality holds for every $s\ge1$ whenever $\beta<\alpha$.
    
    We obtain that $|\Delta|=s2^\alpha-(s-1)2^{\alpha-1}+2^\beta-2^{\beta-1}=2^{\alpha-1}(s+1)+2^{\beta-1}$. Using the previous reasoning, we have that 
    \[
        |\Delta_{-i_0}|=2^{\alpha-1}(s+1) \text{ and }
        |\Delta_{-i_1}|=s2^{\alpha-1}+2^{\beta-1}.
    \]
    Therefore, 
     \[
        t=\left \lfloor  \frac{2^{\alpha+s-1}-|\Delta_{-i_1}|-1}{2} \right \rfloor=\left \lfloor  \frac{d-1}{2} \right \rfloor=t^{theo}.
    \]
\end{proof}

\textit{Remark:} If $s=1$, according to \cite{Hyun2020}, the anticode is length-optimal provided that
\[
    |\Delta| > 2T - l(T) - v_2(T)
\]

In our case,
\[
    |\Delta| = 2^{k-1}+2^{t-1} \quad\text{and}\quad T = 2^{k-2}+2^{t-1}.
\]
Since
\[
    l(T)=k-1\quad \text{and}\quad v_2(T)=t-1,
\]
the condition becomes
\[
    2^{k-1}+2^{t-1} > 2^{k-1}+2^t-(k-1)-(t-1),
\]
which is equivalent to $k+t>2^{t-1}+2$. In particular, the anticode with $t=1$ is length-optimal whenever $k>2$, while the anticodes corresponding to $t=2,3$ are always length-optimal.

\begin{example}
     Let $\Delta=\langle \{1,2,3,4,5\},\{4,5,6\}\rangle$ on the vertex set $[6]$. The associated length-optimal anticode $C_{\Delta^c}$ has parameters $[28,6,12]$. To illustrate the previous theorem, we compute the error-correction capability achieved by the algorithm. Since
    \[
    |\Delta_{-i_1}|=|\Delta_{-1}|=2^4+2^3-2^2=18,
    \]
    we obtain
    \[
    t=\left\lfloor\frac{2^5-|\Delta_{-i_1}|-1}{2}\right\rfloor
    =\left\lfloor\frac{12-1}{2}\right\rfloor
    =5=t_{\mathrm{th}}.
    \]
    Hence, the algorithm attains the theoretical error-correction capability.
\end{example}

We also present the construction of several families of classical codes based on anticodes. In each case, the error-correcting capability attains the maximum allowed by the minimum distance.
\begin{definition}
    A simplex code \cite{MacWilliams1977} over $\mathbb{F}_2$ of dimension $k$ is given by
    \[
    C_S=\{uH: u \in \mathbb{F}_2^k\},
    \]
    where $H\in \mathcal{M}_{k\times (2^k-1)}$ whose columns are the $2^k-1$ non-zero $k$-tuples. A simplex code has parameters $[2^{k}-1, k, 2^{k-1}]$
\end{definition}
We can obtain an equivalent code using anticodes by considering the simplicial complex $\Delta = \{ \varnothing \}$ on the vertex set $[k]$.

\begin{proposition}
    The algorithm reaches the theoretical error correction capability in simplex codes.
\end{proposition}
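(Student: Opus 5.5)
With $\Delta=\{\varnothing\}$ on the vertex set $[k]$, the complement is $\Delta^c=\mathcal{P}([k])\setminus\{\varnothing\}$. So the columns of $G^T$ are exactly the $2^k-1$ nonzero vectors of $\mathbb{F}_2^k$, and $C_{\Delta^c}$ is the simplex code with parameters $[2^k-1,k,2^{k-1}]$. Consequently
\[
t^{\mathrm{theo}}=\left\lfloor \frac{2^{k-1}-1}{2}\right\rfloor .
\]
The plan is to compute the algorithm's guaranteed capability from Theorem~\ref{the1} and show that it equals this value.

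First I compute $|\Delta_{-i}|$ for every $i\in[k]$. Since $\Delta=\{\varnothing\}$, the only face avoiding $i$ is $\varnothing$, so $|\Delta_{-i}|=1$ for all $i$. In particular $|\Delta_{-i_0}|=|\Delta_{-i_1}|=1$, whichever vertex Step 0 selects.

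By Lemma~\ref{lemma:zerosposition}, each bit $u_i$ with $i\neq i_0$ is estimated from $2^{k-1}-1$ disjoint column pairs. Theorem~\ref{the1} then gives
\[
t=\left\lfloor \frac{2^{k-1}-|\Delta_{-i_1}|-1}{2}\right\rfloor=\left\lfloor \frac{2^{k-1}-2}{2}\right\rfloor .
\]
For the last bit $u_{i_0}$, there are $2^{k-1}$ columns with a $1$ in position $i_0$. The per-column majority vote therefore tolerates $\lfloor (2^{k-1}-1)/2\rfloor$ errors, which is never the binding constraint.

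It remains to check that the two floors agree. For $k\ge 2$, $2^{k-1}$ is even, so
\[
\left\lfloor \frac{2^{k-1}-2}{2}\right\rfloor = 2^{k-2}-1=\left\lfloor \frac{2^{k-1}-1}{2}\right\rfloor .
\]
For $k=1$ the code is the trivial $[1,1,1]$ code with $t^{\mathrm{theo}}=0$, and the claim holds vacuously. Hence $t=t^{\mathrm{theo}}$.

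The main point to handle carefully is that this equality depends on parity: the extra $-1$ in Theorem~\ref{the1} (from $|\Delta_{-i_1}|=1$) is absorbed by the floor only because $2^{k-1}$ is even. I would also make explicit why the vote on the pairs is correct whenever $w(e)\le 2^{k-2}-1$. The pairs used for a fixed $i$ are disjoint, so each error affects at most one pair, and at most $2^{k-2}-1$ of the $2^{k-1}-1$ pairs are corrupted. This is strictly fewer than half, so the majority vote returns the correct $u_i$.
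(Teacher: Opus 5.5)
Your proposal is correct and follows the paper's route: compute $|\Delta_{-i}|$ for $\Delta=\{\varnothing\}$ and substitute into the bound of Theorem~\ref{the1}. Your value $|\Delta_{-i}|=1$ is in fact the correct one under the paper's conventions. The paper's proof writes $|\Delta_{-i}|=0$, which is inconsistent with Lemma~\ref{lemma:zerosposition}, since only $2^{k-1}-1$ nonzero vectors have a $0$ in coordinate $i$. Your parity observation $\lfloor (2^{k-1}-2)/2\rfloor=\lfloor (2^{k-1}-1)/2\rfloor$ for $k\ge 2$ is precisely what absorbs this off-by-one and still yields $t=t^{\mathrm{theo}}$.
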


\begin{proof}
    Let $\Delta = \{ \varnothing \}$ over $[k]$. Since $|\Delta_{-i}| = 0$ for all $i \in [k]$, the algorithm corrects $t = \left\lfloor \frac{2^{k-1}-1}{2} \right\rfloor = \left\lfloor \frac{d-1}{2} \right\rfloor$ errors.
\end{proof}

\begin{definition}
    A MacDonald code \cite{macdonal1960} $M_{(u, k)}$, for $0 \le u \le k-1$, is a linear code obtained by puncturing the simplex code with dimension $k$. Its generator matrix is derived from the generator matrix $H$ by deleting a specific set of $2^u-1$ columns corresponding to an affine subspace of dimension $u$ over $\mathbb{F}_2^k$. This code has parameters $[2^k - 2^u, k, 2^{k-1} - 2^{u-1}]$.
\end{definition}

We can obtain a code equivalent to $M_{(u,k)}$ through the anticode $C_\Delta^c$ over the vertex set $[k]$ with the associated simplicial complex $\Delta = \langle A_1 \rangle$, where $|A_1| = u$.

\begin{proposition}
    The algorithm reaches the theoretical error correction capability in MacDonald $M_{(u,k)}$ codes.
\end{proposition}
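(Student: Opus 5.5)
The plan is to apply Theorem~\ref{the1} directly. This means computing $|\Delta_{-i}|$ for every vertex, identifying $i_0$ and $i_1$, and comparing the resulting $t$ with $\lfloor (d-1)/2\rfloor$. Throughout, $d=2^{k-1}-2^{u-1}$ is the MacDonald distance. It agrees with the formula $d=2^{k-1}-\sum 2^{|A_i|-1}$ recalled in the Preliminaries, since $\Delta=\langle A_1\rangle$ has a single maximal face, which trivially has a private vertex.

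\textbf{Step 1 (vertex counts).} For $\Delta=\langle A_1\rangle$ with $|A_1|=u$ there are two types of vertex. If $i\in A_1$, then $\Delta_{-i}=\mathcal{P}(A_1\setminus\{i\})$, so $|\Delta_{-i}|=2^{u-1}$. If $i\notin A_1$, then $\Delta_{-i}=\Delta$, so $|\Delta_{-i}|=2^{u}$. By Lemma~\ref{lemma:zerosposition} these give $2^{k-1}-2^{u-1}$ and $2^{k-1}-2^{u}$ pairs respectively. Hence Step~0 selects $i_0\notin A_1$, provided $u<k$. The case $u=k$ is degenerate: the code is then empty of columns, and it is excluded by $u\le k-1$.

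\textbf{Step 2 (choice of $i_1$ and comparison).} If $k=u+1$, the only vertex outside $A_1$ is $i_0$, so $i_1\in A_1$ and $|\Delta_{-i_1}|=2^{u-1}$. Theorem~\ref{the1} then gives
\[
t=\left\lfloor \frac{2^{k-1}-2^{u-1}-1}{2}\right\rfloor=\left\lfloor\frac{d-1}{2}\right\rfloor ,
\]
which is exactly the claim. The last bit $u_{i_0}$ is handled by Step~2 of the decoder. It uses all $\alpha_{i_0}\ge n/2\ge d$ columns with a $1$ in position $i_0$, as noted in the proof of Theorem~\ref{the1}, so it is never the bottleneck.

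\textbf{Main obstacle.} The difficulty arises when $k-u\ge 2$. Then a second vertex outside $A_1$ exists, so $i_1\notin A_1$ and $|\Delta_{-i_1}|=2^u$. Theorem~\ref{the1} only yields $\lfloor (2^{k-1}-2^u-1)/2\rfloor$, which falls short of $t^{\mathrm{theo}}$ by about $2^{u-2}$ when $u\ge 2$. For $u\in\{0,1\}$ the two floors coincide, and that case closes by the same computation.

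For $u\ge2$, $k-u\ge2$, I would first try to sharpen the analysis of the bits $u_i$ with $i\notin A_1$. One idea is to decode all coordinates in $A_1$ by pair voting, where each has the larger budget $2^{k-1}-2^{u-1}$ of pairs. The outside coordinates would then be recovered by a Step~2-type argument: after substituting the known bits, each column contributes a single-column estimate rather than a pair. One can also exploit the symmetry that, for $i\notin A_1$, a set of $t$ errors can hit at most $t$ of the $2^{k-1}-2^u$ pairs.

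If no such refinement of the decoder is allowed, I would check whether an adversarial pattern corrupting $\lceil (2^{k-1}-2^u)/2\rceil$ distinct pairs can have weight at most $t^{\mathrm{theo}}$. If it can, the proposition as stated can only be established for the range $k=u+1$ or $u\le1$, or under the paper's convention that the vertex set is $\bigcup A_i$ together with the minimal number of extra vertices. I would then state that restriction explicitly.
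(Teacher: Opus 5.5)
Your Step~1 and your treatment of $k=u+1$ are right. The obstacle you isolate is genuine, and it is exactly the step the paper's own proof skips. The paper takes a vertex $i\in A_1$, puts $|\Delta_{-i}|=2^{u-1}$ into Theorem~\ref{the1}, and concludes. But Theorem~\ref{the1} requires $i_1$ to maximise $|\Delta_{-i}|$ over $i\neq i_0$, and this forces $|\Delta_{-i_1}|=2^u$ as soon as $k-u\ge 2$.

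Your proposal goes wrong in how it resolves this. Your claim that the floors coincide for $u=1$ is false: for $k\ge 3$, Theorem~\ref{the1} gives $\lfloor (2^{k-1}-3)/2\rfloor=2^{k-2}-2$, while $t^{\mathrm{theo}}=\lfloor (2^{k-1}-2)/2\rfloor=2^{k-2}-1$. The adversarial check you leave open is easy to settle, and it settles against the statement. So no sharper analysis can close the gap, because the proposition is false for $u\ge 1$, $k-u\ge 2$. Modifying the decoder, as you suggest, would prove a claim about a different algorithm.

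Here is the counterexample. Let $u\ge 1$ and $k-u\ge 2$, and pick $i\notin A_1$ with $i\neq i_0$. Bit $i$ is decided by $P=2^{k-1}-2^u$ pairwise disjoint pairs, and $P$ is even. Take an information word $x$ with $x_i=1$ and place one error in each of $P/2$ distinct pairs. Then $N_1=N_0$, the tie rule makes the decoder output $0$ for bit $i$, and Step~2 never revisits that bit. The error has weight $2^{k-2}-2^{u-1}$. This is at most $t^{\mathrm{theo}}$, which equals $2^{k-2}-2^{u-2}-1$ for $u\ge 2$ and $2^{k-2}-1$ for $u=1$.

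The paper's own $[6,3,3]$ example is an instance, contrary to the claim made there:
\begin{itemize}
\item For $\Delta=\langle\{1\}\rangle$ on $[3]$, Step~0 picks $i_0=2$.
\item Bit $3$ is then decided only by the pairs $(\{2\},\{2,3\})$ and $(\{1,2\},\{1,2,3\})$.
\item Take the information word $(0,0,1)$, whose codeword is $(1,0,1,1,0,1)$ in the paper's column order.
\item A single error in position $2$ gives one vote for each value, so the decoder outputs $0$ for bit $3$.
\end{itemize}

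The proposition therefore holds exactly when $k=u+1$ (the $RM(1,k-1)$ case) or $u=0$ (the simplex code, reading $d=2^{k-1}$). In those cases your computation is a complete proof. The statement should be restricted to these cases rather than proved as written.
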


\begin{proof}
    Taking a vertex $i \in A_1$ we obtain that $|\Delta_{-i}|=2^{u-1}$. So the algorithm correction capability is 
    \[
        t=\left\lfloor  \frac{2^{k-1}-2^{u-1}-1}{2} \right\rfloor=\left\lfloor  \frac{d-1}{2} \right\rfloor .
    \]
\end{proof}

\begin{example}
Let $k=3$ and $u=1$. Let $\Delta = \langle \{1\} \rangle$ be the simplicial complex on $[3]$. 

The minimal faces of $\Delta^c$ are $\{2\}$ and $\{3\}$; therefore, the generator matrix of $C_{\Delta^c}$ is 
\[
G=
\left(
\begin{array}{cccccc}
0 & 0 & 0 & 1 & 1 & 1\\
0 & 1 & 1 & 0 & 1 & 1\\
1 & 0 & 1 & 1 & 0 & 1
\end{array}
\right).
\]
The anticode $C_{\Delta^c}$  is equivalent to  $M_{(1, 3)}$ and has parameters $[6, 3, 3]$; so the algorithm corrects $t = \left\lfloor \frac{d-1}{2} \right\rfloor = 1$ error.
\end{example}


As shown above, both the Simplex codes and the MacDonald codes can be regarded as particular instances of the family of codes studied in this paper. In this context, the decoding algorithm developed here avoids the need for pre-computation, unlike previously proposed approaches, which require the construction of PD-sets \cite{Key2016,Washiela2012}. Furthermore, it has a complexity of $\mathcal{O}(n\log n)$. In addition, it can be applied to the entire family of anticodes, a considerably more general class of codes.

\begin{remark}
    The first-order Reed-Muller code \cite{ReedMuller1954} $RM(1, k-1)$ over $\mathbb{F}_2$ of dimension $k$ can be seen as a particular case of a MacDonald code. Specifically, it is equivalent to a simplex code of dimension $k$ from which an affine subspace of dimension $k-1$ has been deleted. It is given by
    \[
     RM(1, k-1) = \{uG : u \in \mathbb{F}_2^{k}\},
    \]
    where 
    \[
         G= \left(
        \begin{array}{c|c}
        \mathbf{0} & H\\
        \hline
        1 & \mathbf{1}
        \end{array}
        \right),
    \] 
    and $H$ denotes the generator matrix of a $(k-1)$-dimensional simplex code. This code has parameters $[2^{k-1}, k, 2^{k-2}]$.
\end{remark}

We can obtain the generator matrix of the first-order Reed-Muller code $RM(1,k-1)$ using the simplicial complex $\Delta$ generated by a single maximal face of $k-1$ vertices on the vertex set $[k]$. 

\begin{example}
    Let $\Delta=\langle\{ 1,2,3\} \rangle$ on $[4]$. The minimal face of $\Delta^c$ is $\{4\}$; therefore, the generator matrix of the anticode $C_{\Delta^c}$ is 
\[
    G=
    \left(
    \begin{array}{c|ccccccc}
    0 & 0 & 0 & 0 & 1 & 1 & 1 & 1\\
    0 & 0 & 1 & 1 & 0 & 0 & 1 & 1\\
    0 & 1 & 0 & 1 & 0 & 1 & 0 & 1\\
    \hline
    1 & 1 & 1 & 1 & 1 & 1 & 1 & 1
    \end{array}
    \right).
\]
\end{example}

The decoding of first-order Reed-Muller codes has recently been studied using variations of permutation decoding, which exploit algebraic factorizations and specific information sets \cite{Bernal2018,Bernal2023}. In contrast, the algorithm proposed in this paper is based on majority-logic decoding. Moreover, it always achieves the maximum theoretical error-correcting capability for this family of codes.

\section*{Declarations}

\textbf{Author contributions.} Antonio Jes\'us Lorite-L\'opez, Daniel Camaz\'on-Portela and Juan Antonio L\'opez-Ramos: Conceptualization, methodology, investigation, analysis, writing-original draft, writing-review and editing. All authors of this article have contributed equally. All authors have read and approved the final version of the manuscript for publication.

 \noindent \textbf{Funding.} This research is supported by PID 2022-140934OB-I00 funded by MI-CIU/AEI/10.13039/501100011033 and ``Junta de Andalucía FQM-425''. The second author was partially supported by grant PID2022-138906NB-C21 funded by MI-CIU/AEI/10.13039/501100011033 and by ERDF A way of making Europe.

\noindent  \textbf{Competing interests.} All authors declare no conflicts of interest in this paper.

\noindent  \textbf{Data and code availability.} The code used to generate the Figure \ref{fig:simulation} is available from the corresponding author upon request.


\bigskip

\bibliographystyle{plain} 
\bibliography{sample} 


\end{document}